\algrenewcommand\ALG@beginalgorithmic{\footnotesize}
\newcommand{\approptoinn}[2]{\mathrel{\vcenter{
			\offinterlineskip\halign{\hfil$##$\cr
				#1\propto\cr\noalign{\kern2pt}#1\sim\cr\noalign{\kern-2pt}}}}}
\theoremstyle{plain} 
\theoremstyle{plain} \newtheorem{theorem}{Theorem}
\theoremstyle{remark}
\theoremstyle{definition}\newtheorem{definition}{Definition}
\theoremstyle{definition}\newtheorem{corollary}{Corollary}
\definecolor{MyBlue}{rgb}{0.0,0.0,0.0} % make blue -->> black
\journal{Journal of Signal Processing}
\begin{document}	
% ------------------------------------secions-------------------------------------
\begin{frontmatter}
	\title{Wideband Modal Orthogonality: A New Approach for Broadband DOA Estimation}
	%
	%\author{Shervin~Amirsoleimani and~Ali~Olfat\\Signal Processing and Communication Systems Laboratory, University of Tehran, Tehran, Iran.}
	\author{Shervin~Amirsoleimani}
	\ead{amirsoleimani@ut.ac.ir}
	\author{Ali~Olfat}
	\ead{aolfat@ut.ac.ir}
	\address{Signal Processing and Communication Systems Laboratory, University of Tehran, Tehran, Iran.}
	%\date{\today}
	%\maketitle
	
	\begin{abstract}
		%%%%%%%%%%%%%%%%%%%%%%%%%%%%%%%%%%%%%%%%%%%%%%%%%%%%%%%%%%%%%%%%%%%%%%%%%%
		% problem expression ****************************************
		%%%%%%%%%%%%%%%%%%%%%%%%%%%%%%%%%%%%%%%%%%%%%%%%%%%%%%%%%%%%%%%%%%%%%%%%%%
		%\color{MyBlue}
		Wideband direction of arrival (DOA) estimation techniques for sensors array have been studied extensively in the literature. Nevertheless, needing prior information on the number and directions of sources or demanding heavy computational load makes most of these techniques less useful in practice.
		In this paper, a low complexity subspace-based framework for DOA estimation of broadband signals, named as wideband modal orthogonality (WIMO), is proposed and accordingly two DOA estimators are developed.
		First, a closed-form approximation of spatial-temporal covariance matrix (STCM) in the uniform spectrum case is presented. The eigenvectors of STCM associated with non-zero eigenvalues are 
		modal components of the wideband source in a given bandwidth and direction. 
		WIMO idea is to extract these eigenvectors at desired DOAs from the approximated STCM and test their orthogonality to estimated noise subspace.
		In the non-uniform spectrum case, WIMO idea can be applied by approximating STCM through numerical integration.
		Fortunately, STCM approximation and modal extraction can be performed offline.
		WIMO provides DOA estimation without the conventional prerequisites, such as spectral decomposition, focusing procedure and, a priori information on the number of sources and their DOAs. 
		Several numerical examples are conducted to compare the WIMO performance with the state-of-the-art methods. 
		Simulations demonstrate that the two proposed DOA estimators achieve superior performance in terms of probability of resolution and estimation error along with orders of magnitude runtime speedup.
		
	\end{abstract}
	
	\begin{keyword}
		Broadband DOA estimation, 
		wideband array processing, 
		wideband modal orthogonality.
	\end{keyword}
\end{frontmatter}
\section{Introduction}\label{sec.introduction}
Wideband is attributed to those signals whose power spectrum occupies a fractional bandwidth of larger than 20$\%$ \cite{Nikookar2009}. This type of signals arise in many applications such as radar, passive sonar, microphone arrays, seismology and high rate wireless communication systems \cite{Chellappa2014}. 
In the wideband scenario, unlike the narrowband, each frequency bin carries different information regarding source direction of arrival (DOA) angles \cite{Tuncer2009}.
Accordingly, observations are commonly decomposed into multiple subbands, such that in each subband, the narrowband assumption holds. 
The processing methods are generally categorized as incoherent \cite{Kailatha1984} and coherent \cite{Wang1985}. In both methods the covariance matrix at all subbands are estimated. While incoherent method applies narrowband DOA estimation at each subband, the coherent solution 
combines all subband covariances coherently and applies DOA estimation on the final covariance matrix.
%focuses all subbands covariances to a form single covariance at a reference frequency.
The incoherent approach has poor performance in low SNR, while 
%\color{MyBlue} 
most of the coherent schemes 
%\color{black}
require a priori information on the number of sources and their DOAs to initiate focusing procedure. 
% \cite{Doron1992}.
\color{MyBlue}
There is also some extensions on CSSM which could partially reduce some of its shortcomings using beamforming invariant techniques. See \cite{Lee1994,ward1998broadband,Sathish2006} for details.
\color{black}

Another algorithm which effectively overcomes the coherent approach shortcomings is TOPS \cite{Yoon2006a}. TOPS tests the orthogonality of the projected signal subspace and the noise at each DOA and frequency bin. It does not require preliminaries of the coherent scheme but suffers from spurious peaks at all SNR levels.
Performance of TOPS has been improved further in the subsequent developments \cite{Okane2010,Shaw2016}. 

Maximum likelihood (ML) approaches have also been studied for wideband source localization \cite{Chen2002,Mada2009,Lu2011b}. ML-based solutions are statistically optimum, but lead to nonlinear optimization problem in the presence of multiple sources.

Other methods utilizing time delayed samples have been developed in parallel \cite{buckley1988broad,Yin2018b,Agrawal2000}. %Krolik1989
BASS-ALE \cite{buckley1988broad} incorporates temporally delayed samples in the observation vectors. It first estimates spatial-temporal covariance matrix (STCM), then tests the orthogonality of the array manifold with the estimated noise subspace at each frequency bin. BASS-ALE requires only single eigenvalue decomposition (EVD) for the whole band but suffers from the intrinsic loss of incoherency.
%is essentially an incoherent method and its performance degrades by the bandwidth increase. 
%\color{MyBlue}
Agrawal and Prasad \cite{Agrawal2000} has proposed two spatial-only approaches which are based on subspace of columns of wideband source's covariance matrix. In the first approach  an exhaustive search among all $\left\lbrace \theta_1,\cdots,\theta_\mathcal{K}\right\rbrace$ and also sources' number $\mathcal{K}$ is required, which is computationally intractable in real applications. The second approach introduced in \cite{Agrawal2000} is a feasible method but has lower performance compared to the first one. There is also no specific solution to distinguish eigenvectors corresponding to noise subspace.
\color{black}
STEP \cite{Yin2018b} method also falls in this class. It calculates steered covariance matrix at each test angle. Then tests the orthogonality of signal subspace with the noise. STEP requires a heavy computational load and is highly sensitive to signal subspace order selection.

%All the aforementioned methods, in a bigger picture, can be categorized in the class of subspace-based solutions. 
With the advent of sparse representation (SR) framework, it has also been applied to wideband DOA estimation problem \cite{Yang2017,malioutov2005sparse,liu2011direction,Hu2012a,Amirsoleimani2019,He2015}. $\ell_1$-SVD \cite{malioutov2005sparse} represents each subband data in an overcomplete steering vector dictionary, then forms a joint sparse representation problem according to identical spectral support of wideband sources.
%via stacking all subbands observation and dictionaries sequentially. 
A similar approach has been followed in W-SpSF \cite{He2015}, but with utilizing the modified subband covariance matrix as the observation vector.
On the other hand, W-CMSR \cite{liu2011direction} and W-LASSO \cite{Hu2012a} have pursued a time-domain approach. They directly represent the array covariance matrix in temporally delayed versions of source correlation function; thus, no subband processing is involved. 
%W-LASSO find correlation function based on the assumption of a flat spectrum and in W-CMSR correlation function can be predefined or estimated from the observations.
The main drawback of SR-based schemes are the selection of the regularization parameter, which severely affects the overall performance. They also suffer from heavy computational burden due to the sparse solution recovery procedure.

% \textbf{Scope and contributions:}
In this paper, a fast solution for wideband direction of arrival estimation is introduced. 
The proposed approach falls into the category of subspace-based methods. It extracts the noise subspace from the estimated spatial-temporal covariance matrix. On the other hand, a closed-form approximation of the STCM 
%\color{MyBlue}
is derived for uniform power spectral density (PSD).
Through this approximation the eigenvectors of STCM corresponding to signal subspace can be computed for desired DOA grid points. These eigenvectors are named as modal components of a wideband source at direction $\theta$ and $f\in [f_l, f_h]$. 
%\color{black}
Depending on the number of components participating in the orthogonality test, two wideband DOA estimators are proposed. Pure wideband modal orthogonality (p-WIMO) applies the whole approximated matrix as the signal subspace, while 1-WIMO only uses the most powerful signal mode. This mode is named as \textit{generalized steering vector} since it converges to the array steering vector as the bandwidth tends to zero.

WIMO requires a single eigenvalue decomposition (EVD) for the whole band, similar to the BASS-ALE approach \cite{buckley1988broad}, however, in contrast to BASS-ALE, it does not involve incoherent amplitude summation on spatial spectra. In addition, the main computations in WIMO appertain to calculation of approximation matrix and modal components retrieval, which is performed offline. 
Consequently, modal components for all test angles can be saved in a database prior to any computation. 
Furthermore, no spectral decomposition or subband processing is introduced; hence, WIMO process time is independent of signal bandwidth.
Finally, no a priori information on the number of sources and their DOAs are necessary and no tedious focusing procedure is involved in WIMO.
All of these advantages, make WIMO an efficient broadband DOA estimator in terms of computational cost and practical implementation.

%\color{MyBlue}
WIMO can also be applied for non-uniform PSD. It is shown that in the general case, STCM approximation can be done through a numerical integration which only adds some more offline computational load. 
It is also shown that applying uniform PSD assumption to signal with intrinsically non-uniform PSD would lead to a mismatch loss that can be reduced with appropriate uniform bandwidth selection.
Of course, closed-form approximation would be obtained for some special form of power spectral density which is not the subject of this article.
%\color{black}

The rest of the paper is organized as follows. Spatial-temporal observation model of array data is reviewed in Section~\ref{sec.signal_model}.
The proposed wideband DOA estimator using the modal orthogonality concept is introduced in Section~\ref{sec.wideband}. This section also contains complementary issues on STCM approximation matrix structure, asymptotic behavior of STCM and broadband signal subspace dimension estimation, algorithms' complexity analysis and WIMO extension in non-uniform PSD case.
The simulation results and comparison with state-of-the-art methods are presented in section~\ref{sec.simulation}. Finally, conclusions are drawn in Section~\ref{sec.conclusion}.
\section{Spatial-Temporal observation model}\label{sec.signal_model}
Assume a plane incident wave, $x(t)$, impinging on an array of $N_S$ omnidirectional sensors. Considering the first antenna as the time reference, the $k$th sensor observation can be formulated as,
\begin{equation} \label{equ.meas_model1}
y_k(t) = x(t-\tau_k) \quad k=0,1,\cdots,N_S-1 \quad \tau_0=0
\end{equation}
%\begin{IEEEeqnarray}{rl}
%	y_0(t) &= x(t-\tau_0) 	\IEEEyesnumber\IEEEyessubnumber*\\
%	y_1(t) &= x(t-\tau_1) 	\\
%	\vdots & \nonumber\\ 			
%	y_{N_s-1}(t) &= x(t-\tau_{N_S-1}) 	
%\end{IEEEeqnarray}
where $\tau_k$ corresponds to $k$th sensor delay. For a linear array $\tau_k = -z_k \sin\theta/c$ and in the general 3D case,
%In the general 3D case, the following relationship for $\tau_k$ holds \cite{VanTrees2002},
\begin{equation} 
\tau_k = -\frac{\mathbf{u}^T(\theta,\phi)\mathbf{p}_k}{c} \quad , \quad
\mathbf{u}(\theta,\phi) \triangleq \left[\begin{array}{c}
\cos\theta \cos\phi \\
\cos\theta \sin\phi \\
\sin\theta
\end{array}\right]
\end{equation}
where $\mathbf{p}_k = [x_k \: , y_k \: , z_k]^T$ stands for the $k$th sensor position in Cartesian coordinate and $c$ is the wave propagation velocity. 
%In the case of linear array $\tau_k = -z_k \sin\theta/c$ and for uniform interspacing $d$, $\tau_k$ is $-kd\sin\theta/c$.
By denoting $X(f)$ as the Fourier transform of $x(t)$, \eqref{equ.meas_model1} can be rewritten as,
%the $k$th sensor received signal can be rewritten in terms of the inverse Fourier transform as,
\begin{equation}\label{equ.IFT}
x(t-\tau_k) = \int X(f)e^{j2\pi f (t-\tau_k)}\cdot df
\end{equation}
In the narrowband scenario, 
%the baseband signal bandwidth $B$ should be much smaller than the center frequency $f_c$, 
$X(f) \simeq \delta(f-f_c)$ and the integral in \eqref{equ.IFT} is simplified as a phase-shift, $x(t-\tau_k) \approxeq e^{-j2\pi f_c \tau_k}x(t)$. Therefore, the observation vector $\mathbf{y}(t)$ with entries defined in \eqref{equ.meas_model1} can be expressed as,
\begin{equation} \label{equ.narrow_model}
\mathbf{y}(t) = \mathbf{a}(\theta,f_c) x(t)
\end{equation}
where $\mathbf{a}(\theta,f_c) = [e^{-j2\pi f_c \tau_0(\theta)} , \cdots , e^{-j2\pi f_c \tau_{N_S-1}(\theta)}]^T$ is the array steering vector. The above discussion considers a rank-1 model where each independent source spans an one-dimensional signal subspace, however, in wideband scenario a single source captures multiple frequency bins and naturally, lies in a subspace with larger dimensions. 

Spatial-temporal observation vector is constructed by stacking the delayed temporal samples of the array snapshots. 
This leads to better discrimination of wideband sources due to their different instantaneous frequency and also larger subspace dimensions \cite{buckley1988broad}. 
Using $m$ temporally delayed samples of each sensor, spatial-temporal observation vector $\tilde{\mathbf{y}}(t)$ is written as,
\begin{equation} \label{equ.stom}
\begin{aligned}
\tilde{\mathbf{y}}(t) = &\left[y_0(t-(m-1)dt),\cdots,y_0(t-dt),y_0(t),\cdots\right. \\
&\left. y_{N_S-1}(t-(m-1)dt),\cdots,y_{N_S-1}(t)\right]^T _{mN_s\times 1}
\end{aligned}
\end{equation}
Defining $\mathbf{t} \triangleq \left[0,\; dt,\;\cdots,\; (m-1)dt\right]^T$ and using \eqref{equ.IFT}, $\tilde{\mathbf{y}}(t)$ can be expressed by inverse Fourier transform as,
\begin{equation} \label{equ.dis_obs_model}
\begin{aligned}
\tilde{\mathbf{y}}(t)
&=\int X(f)\left[\begin{array}{c}
e^{j2\pi f\mathbf{t}} \\
e^{j2\pi f(-\tau_1 \mathbf{1} + \mathbf{t})}\\
\vdots \\
e^{j2\pi f(-\tau_{N_S - 1} \mathbf{1} + \mathbf{t})}
\end{array}\right]\cdot df \\
&=\int X(f) \mathbf{g}(f,\theta) \cdot df
\end{aligned}
\end{equation}
where $\mathbf{1}_{m\times 1} = [1,\cdots , 1]^T$ and $dt$ is the sampling interval. The $\mathbf{g}(f,\theta)$ vector represents the spatial-temporal observation model for a single tone at frequency $f$ impinging on the array from angle $\theta$.

\subsection{spatial-temporal covariance matrix (STCM)}
%In subspace-based signal processing algorithms sample covariance matrix plays a principal role. Since it contains whole subspace information of the observed signal and has often less dimensions than the raw data.
Using \eqref{equ.stom}, Spatial-temporal covariance matrix (STCM) is defined as,
\begin{equation} \label{equ.STCM_def}
\mathbf{S}_{\tilde{\mathbf{y}} \tilde{\mathbf{y}}} \triangleq 
\mathbb{E}\left\lbrace\tilde{\mathbf{y}}(t) \tilde{\mathbf{y}}(t)^H\right\rbrace
\end{equation}
%\color{MyBlue}
Assuming $\mathcal{K}$ uncorrelated sources with center frequency $f_c$ and bandwidth $B$ and spatial-temporal white additive noise, $\mathbf{S}_{\tilde{\mathbf{y}} \tilde{\mathbf{y}}}$ can be written as,
\begin{equation} \label{equ.STCM2}
\mathbf{S}_{\tilde{\mathbf{y}} \tilde{\mathbf{y}}} = 
\sum_{k=1}^{\mathcal{K}} \mathbf{S}(\theta_k,f_c,B) + \sigma_n^2 \mathbf{I}_L
\end{equation}
where $\mathbf{S}(\theta,f_c,B)$ is the STCM of a single wideband source located at $\theta$, $\sigma_n^2$ is the variance of noise and $L=mN_S$ is the length of observation vector $\tilde{\mathbf{y}}(t)$. 
Due to limited time-bandwidth product of sources, $\mathbf{S}(\theta,f_c,B)$ has a low rank structure and for a modest number of temporally delayed samples $m$, the dimension of broadband signal subspace is less than the observations space dimension $L$ \cite{M.Buckley1986,buckley1988broad}. Accordingly, by proper selection of $m$ a noise-only subspace exists.
Let $P$ denotes the number of eigenvectors belonging to the signal subspace, 
%\color{black}
using eigenvalue decomposition (EVD), \eqref{equ.STCM2} is rewritten as
\begin{equation}\label{equ.Syy_EVD}
\mathbf{S_{\tilde{\mathbf{y}} \tilde{\mathbf{y}}}} = 
\mathbf{U}_s \mathbf{\Lambda}_s \mathbf{U}_s^H + \mathbf{U}_n \mathbf{\Lambda}_n \mathbf{U}_n^H
\end{equation}
%\color{MyBlue}
Denote $\lambda _1 \ge \lambda_2\ge\dots \ge\lambda_L$ as the eigenvalues of $\mathbf{S}_{\tilde{\mathbf{y}} \tilde{\mathbf{y}}}$, $\mathbf{\Lambda}_s = \text{diag}(\lambda_1 , \cdots , \lambda_P)$ and 
$\mathbf{U}_s = [\mathbf{u}_1 , \cdots , \mathbf{u}_P]$ as eigenvalues and eigenvectors corresponding to signal subspace respectively, and $\mathbf{\Lambda}_n = \text{diag}(\lambda_{P+1} , \cdots , \lambda_{L})$ and $\mathbf{U}_n = [\mathbf{u}_{P+1} , \cdots , \mathbf{u}_{L}]$ as the eigenvalues and eigenvectors corresponding to noise subspace. 
Then, the estimation of STCM is obtained using $M$ time snapshots as,
\color{black}
\begin{equation} \label{equ.Syy_hat}
\hat{\mathbf{S}}_{\tilde{\mathbf{y}} \tilde{\mathbf{y}}} = 
\frac{1}{M-m+1}\sum_{t=m}^{M} \tilde{\mathbf{y}}(t) \tilde{\mathbf{y}}(t)^H
\end{equation}
Similarly, $\hat{\mathbf{U}}_s$ and $\hat{\mathbf{U}}_n$ refer to the estimated eigenvectors for signal and noise subspace, by applying EVD to $\hat{\mathbf{S}}_{\tilde{\mathbf{y}} \tilde{\mathbf{y}}}$ respectively.
\subsection{Spatial-spectrum transform}
The spatial-temporal model \eqref{equ.dis_obs_model} has been used in previous works \cite{Kailatha1984,buckley1988broad}, 
but $\mathbf{g}(f,\theta)$ itself can be used as a low-resolution correlation estimator for spatial-spectrum analysis of the array data. This interpretation is formulated as a \textit{Spatial-Spectrum transform}.
\begin{definition}
	\textit{Spatial-Spectrum transform of order $m$}, for the spatial-temporal observation vector $\tilde{\mathbf{y}}\in\mathbb{C}^L$, denoted by $Y_{\mathcal{SS}^m}(f,\theta)$, is a linear transform $t \mapsto [f,\theta]$ defined as:
	\begin{equation} \label{equ.SOT} 
	Y_{\mathcal{SS}^m}(f,\theta) \triangleq \sum_{k} \tilde{y}_k e^{-j2\pi h_k f}
	\end{equation}
	%\color{MyBlue}
	where $\tilde{\mathbf{y}}$ is defined in \eqref{equ.stom} and $m$ is the temporal lag order.
	$h_k$ is an implicit function of elevation angle $\theta$, temporal sample vector $\mathbf{t}$ and the array elements' location $\mathbf{p}_k$. According to \eqref{equ.dis_obs_model}, $h_k$ can be expressed as,
	\begin{equation} \label{equ.hk}
	h_k = -\tau_{\lfloor \frac{k}{m}\rfloor} + t_{\mathtt{mod}(k,m)} \quad k\in \lbrace 0,\cdots,mN_S - 1\rbrace
	\end{equation}
	where $\lfloor \cdot \rfloor$ is the floor operator and $\mathtt{mod}(\cdot,\cdot)$ denotes the modulo operator.
	\color{black}
\end{definition}
For $m$=1, SS-Transform reduces to the conventional beamformer, and with $N_S$=1 it is simplified to discrete Fourier transform. This interpretation will be used in the next section.
\section{Proposed Wideband DOA Estimator}\label{sec.wideband}
In this section, we consider the localization of wideband sources using the spatial-temporal covariance matrix.
%\color{MyBlue}
Let $\mathbf{u}_i(\theta,f_c,B)$ denote the eigenvector of $\mathbf{S}(\theta,f_c,B)$ corresponding to $i$-th non-zero eigenvalue arranged in decreasing order. 
%Let $\mathbf{U}(\theta,f_c,B)$ denote a set of eigenvectors of $\mathbf{S}(\theta,f_c,B)$ corresponding to non-zero eigenvalues and $\mathbf{u}_i(\theta,f_c,B)$ the corresponding $i$-th column associated to . 
Then, in the presence of $\mathcal{K}$ sources, each eigenvector corresponding to the noise subspace of STCM (columns of $\mathbf{U}_n$ in \eqref{equ.Syy_EVD}) 
%is a linear combination of $\mathbf{u}_{i}(\theta_k,f_c,B) \; k\in \{1,\cdots,\mathcal{K}\}$ and 
is orthogonal to $\mathbf{u}_{i}(\theta_k,f_c,B)$, i.e., we can write
%\begin{equation} \label{equ.wimo-theory}
%\begin{aligned}
%& \mathbf{u}_{i}(\theta,f_c,B) \in \mathrm{null}(\mathbf{U}_n) \quad \\
%& \mathrm{for} \quad 
%\left\lbrace
%\begin{array}{l}
%\theta \in \{\theta_1 , \cdots , \theta_\mathcal{K}\} \\
%i \in \{1,\cdots,\mathrm{rank}(\mathbf{S}(\theta,f_c,B))\}
%\end{array}
%\right.
%\end{aligned}
%\end{equation}
\begin{equation} \label{equ.wimo-theory}
\mathbf{u}_{i}(\theta,f_c,B) \in \mathrm{null}(\mathbf{U}_n)
\end{equation}
for $\theta \in \{\theta_1 , \cdots , \theta_\mathcal{K}\}$ and $i \in \{1,\cdots,\mathrm{rank}(\mathbf{S}(\theta,f_c,B))\}$.
Since in the wideband case $\mathrm{rank}(\mathbf{S}(\theta,f_c,B))$ is larger than one, for each direction there are several $\mathbf{u}_{i}(\theta,f_c,B)$ vectors which \eqref{equ.wimo-theory} can be applied to estimate the sources' directions. 
The idea is to approximate $\mathbf{u}_{i}(\theta,f_c,B)$ for desired $\theta$ points and then test \eqref{equ.wimo-theory} with $\hat{\mathbf{U}}_n$ estimated from the observations. 
As in \cite{buckley1988broad}, we use the term \textit{modal component} for $\mathbf{u}_{i}(\theta,f_c,B)$ vectors and will denote it occasionally by $\mathbf{u}_i$ for simplicity. Also, $\breve{ }$ superscript is used as approximate of a quantity; e.g. $\breve{\mathbf{S}}(\theta,f_c,B)$ denotes the approximate of $\mathbf{S}(\theta,f_c,B)$ and $\breve{\mathbf{u}}_{i}(\theta,f_c,B)$ stands for an approximate of $\mathbf{u}_{i}(\theta,f_c,B)$.

%This chapter is organized as follows. In Section~\ref{sec.stcm-app} a closed-form STCM approximation in uniform PSD case is presented. In Section~\ref{sec.gsv} the concept of \textit{generalized steering vector} (GSV) is introduced. Then, the STCM closed-form approximation is examined in asymptotic bandwidth regimes in Section~\ref{sec.asymptotic}. Two wideband DOA estimators and their corresponding parameter tuning schemes are discussed in Section~\ref{sec.algorithms} and Section~\ref{sec.parameter} respectively. Finally some complementary issues on WIMO such as its extension in non-uniform spectrum case and computational complexity analysis are presented in Section~\ref{sec.non-uniform} and Section~\ref{sec.complexity}.
%\color{black}

\subsection{STCM Approximation} \label{sec.stcm-app}
\color{MyBlue}
Spatial covariance matrix approximation was first presented in \cite{Agrawal2000}. In this subsection we extend the approximation for spatial-temporal covariance matrix. 
\color{black}
By substitution of \eqref{equ.dis_obs_model} in STCM definition \eqref{equ.STCM_def}, for a single source located at $\theta$ with spectral content $f\in [f_l, f_h]$, we obtain,
\begin{equation} \label{equ.obs_covariance}
\begin{aligned}
\mathbf{S}_{\tilde{\mathbf{y}} \tilde{\mathbf{y}}} = 
\iint_{f_l}^{f_h}\mathbb{E}\lbrace X(u)X^*(v)\rbrace
\mathbf{g}(u,\theta)\mathbf{g}^H(v,\theta)\cdot du dv
\end{aligned}
\end{equation}
where $\mathbb{E}\lbrace X(u)X^*(v)\rbrace$ is the cross-correlation of Fourier transform of source signal $x(t)$. 
%Then we can write \cite[p.~516]{Papoulis2002a},
%\begin{equation} \label{equ.XuXv}
%\mathbb{E}\lbrace X(u)X^*(v)\rbrace = \int_{f_l}^{f_h} W(u-\beta)W^*(v-\beta)S(\beta)\cdot d\beta
%\end{equation}
%where $W(f)$ is the Fourier transform of the window function applied to time samples, and $S(f)$ is the power spectral density of $x(t)$. Since $m$ consecutive samples are arranged at each $\tilde{\mathbf{y}}(t)$ and with uniform spectral density assumption for $x(t)$,
%\begin{equation} \label{equ.temp1}
%\begin{aligned}
%W(f) &= e^{-j\pi(m-1)\frac{f}{f_s}}
%\frac{\sin \pi m\frac{f}{f_s}}{\sin \pi \frac{f}{f_s}}\\
%S(f) &= \left\lbrace\begin{array}{cc}
%\frac{\sigma_x^2}{f_h - f_l} & f\in [f_l\sim f_h] \\
%0 & o.w.
%\end{array}
%\right.
%\end{aligned}
%\end{equation}
%Substituting \eqref{equ.temp1} in \eqref{equ.XuXv}, we have,
%\begin{equation}
%\begin{aligned}
%\mathbb{E}\lbrace X(u)&X^*(v) \rbrace =\frac{\sigma_x^2}{f_h - f_l} e^{-j\pi\frac{m-1}{f_s}(u-v)} \\
%&\int_{f_l}^{f_h}
%\frac{\sin \pi \frac{m}{f_s}(u-\beta)}{\sin \pi \frac{u-\beta}{f_s}}
%\frac{\sin \pi \frac{m}{f_s}(v-\beta)}{\sin \pi \frac{v-\beta}{f_s}} \cdot d\beta
%\end{aligned}
%\end{equation}
If the observation time $mdt=T$ at each spatial-temporal vector, is larger than the signal coherence time $\tau_0 = 1/B$, the frequency bins of the Fourier transform become uncorrelated \cite[p.~315]{Chellappa2014}. It means that if $mB/f_s\gg 1$, we can write,

\begin{equation}\label{equ.FC_approx}
\mathbb{E}\lbrace X(u)X^*(v)\rbrace \approxeq S(u)\delta(u-v)
\end{equation}
where $\delta(t)$ is the Dirac Delta function and $S(f)$ is the PSD of the source signal.
Substituting \eqref{equ.FC_approx} in \eqref{equ.obs_covariance},
\begin{align} \label{equ.int_approx}
\mathbf{S}_{\tilde{\mathbf{y}} \tilde{\mathbf{y}}}
\approxeq \int_{f_l}^{f_h} S(u)\cdot\mathbf{g}(u,\theta)\mathbf{g}^H(u,\theta) \cdot du 
\end{align}
%\frac{\sigma_x^2}{B} 
Recalling the definition of the correlation function and its relation with PSD,
\begin{equation}
r_x(\tau) \triangleq \mathbb{E}\lbrace x(t)x^*(t-\tau)\rbrace = 
\int_{f_l}^{f_h} S(f) e^{j2\pi f\tau} \cdot df
\end{equation}
Then according to \eqref{equ.dis_obs_model}, $g_k = \exp\left(j2\pi h_k f\right)$, STCM approximation entries in \eqref{equ.int_approx} are in fact rearranged samples of correlation function. Denoting $(k,l)$-th entry of $\mathbf{S}_{\tilde{\mathbf{y}} \tilde{\mathbf{y}}}$ with $s_{k,l}$,
\begin{equation} \label{equ.corr-function}
s_{k,l} = r_x(h_k - h_l)
\end{equation}
Assuming a uniform PSD (see Section~\ref{sec.non-uniform} for the general non-uniform case), then
\begin{equation} \label{equ.temp1}
S(f) = 
\begin{cases}
\frac{\sigma_x^2}{f_h - f_l} & f\in [f_l, f_h] \\
0 & \text{o.w.}
\end{cases}
\end{equation}
Substituting \eqref{equ.temp1} in \eqref{equ.int_approx}, the following approximation for $s_{k,l}$ in the uniform PSD case is derived,
\begin{equation} \label{equ.s_kl_p}
s_{k,l} \approxeq \frac{\sigma_x^2}{B} \frac{\sin(\pi B (h_k - h_l))}{\pi(h_k - h_l)} e^{j2\pi f_c(h_k - h_l)}
\end{equation}
where $f_c = \frac{f_l + f_h}{2}$ and $B=f_h - f_l$.
Let $\breve{s}_{k,l}$ denotes the normalized entry of the approximated STCM:
\begin{equation} \label{equ.smat}
\breve{\mathbf{S}}(\theta,f_c,B) = [\breve{s}_{k,l}] \quad , \quad 
s_{k,l} \approxeq \sigma_x^2 \breve{s}_{k,l}
\end{equation}
which represents the approximation of STCM for a single source at $\theta$ with spectral content $f\in [f_c - \frac{B}{2} , f_c + \frac{B}{2}]$. 
Consequently, for $\mathcal{K}$ sources the following approximation for $\mathbf{S_{\tilde{\mathbf{y}} \tilde{\mathbf{y}}}}$ holds,
\begin{equation}
\mathbf{S_{\tilde{\mathbf{y}} \tilde{\mathbf{y}}}} \approxeq \sum_{k=1}^{\mathcal{K}} \sigma_{x,k}^2 
\breve{\mathbf{S}}(\theta_k , f_{c,k} , B_k)% + \sigma_n^2 \mathbf{I}
\end{equation}
Also, $\breve{\mathbf{S}}$ and $\breve{\mathbf{S}}(\theta)$ will be occasionally used instead of $\breve{\mathbf{S}}(\theta,f_c,B)$ for simplicity.

\subsection{Generalized steering vector} \label{sec.gsv}
By providing an approximation for STCM, we introduce the generalized steering vector as the most powerful modal component of the broadband source.
\begin{definition} \label{def.gsv}
	\textit{Generalized Steering Vector} (GSV), $\breve{\mathbf{u}}_1(\theta,f_c,B)$, for a broadband source at direction $\theta$ and center frequency $f_c$ with bandwidth $B$, is the eigenvector of the $\breve{\mathbf{S}} = [\breve{s}_{k,l}]$ matrix associated with the largest eigenvalue.
\end{definition}
	
\begin{corollary}
Since the matrix $\breve{\mathbf{S}}$ is Hermitian, $\breve{s}_{k,l} = \breve{s}_{l,k}^*$ , based on the \textit{Rayleigh quotient theorem} \cite{Horn2013}, the generalized steering vector $\breve{\mathbf{u}}_1$ can be expressed as,
\begin{equation}\label{equ.Rayleigh}
\breve{\mathbf{u}}_1(\theta,f_c,B) \triangleq \underset{\mathbf{x}}{\text{argmax}} 
\; \frac{\mathbf{x}^H \breve{\mathbf{S}}(\theta,f_c,B)\mathbf{x}}{\|\mathbf{x}\|_2^2}
\end{equation}
\end{corollary}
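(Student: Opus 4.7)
The plan is to derive the Rayleigh-quotient form of $\breve{\mathbf{u}}_1$ as an immediate consequence of the Hermitian structure of $\breve{\mathbf{S}}$ together with the classical Rayleigh--Ritz theorem, so the argument splits into two short pieces: establishing the structural property, then invoking the spectral characterization.

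First I would verify the Hermitian claim directly from the closed form in \eqref{equ.s_kl_p}. Swapping the indices $k$ and $l$ flips the sign of $(h_k-h_l)$; the prefactor $\sin(\pi B (h_k-h_l))/(\pi(h_k-h_l))$ is an even real function of this argument, while the exponential $e^{j2\pi f_c (h_k-h_l)}$ becomes its conjugate. Combining, $\breve{s}_{l,k}=\breve{s}_{k,l}^{*}$, so $\breve{\mathbf{S}}(\theta,f_c,B)$ is Hermitian. In addition, since \eqref{equ.int_approx} writes the matrix as a non-negative integral of rank-one outer products $\mathbf{g}(u,\theta)\mathbf{g}^H(u,\theta)$, it is positive semi-definite, which will let us conclude that the sought maximum is attained and non-negative.

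Next I would apply the Rayleigh--Ritz theorem to the Hermitian matrix $\breve{\mathbf{S}}$. By that theorem the ratio $\mathbf{x}^H\breve{\mathbf{S}}\mathbf{x}/\|\mathbf{x}\|_2^2$ over nonzero $\mathbf{x}\in\mathbb{C}^L$ attains its supremum, the supremum equals the largest eigenvalue $\lambda_1$ of $\breve{\mathbf{S}}$, and every maximizer is an eigenvector associated with $\lambda_1$. Since Definition~\ref{def.gsv} defines $\breve{\mathbf{u}}_1(\theta,f_c,B)$ to be precisely the unit eigenvector of $\breve{\mathbf{S}}$ belonging to the largest eigenvalue, the characterization \eqref{equ.Rayleigh} follows, with the understanding that the $\mathrm{argmax}$ is taken modulo scalar multiples.

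The only subtle point, and the one I would address explicitly, is the well-definedness of the $\mathrm{argmax}$ when $\lambda_1$ has multiplicity greater than one; in that case the maximizer is a direction in the top eigenspace rather than a unique unit vector, and \eqref{equ.Rayleigh} should be read as a selection from that eigenspace. I would mention that for generic $(\theta,f_c,B)$ the top eigenvalue is simple (consistent with the low-rank sinc-modulated structure of \eqref{equ.s_kl_p}), so this is not a practical obstacle. Everything else is a direct citation of the Rayleigh--Ritz theorem, so there is no substantive calculation to perform beyond the symmetry check above.
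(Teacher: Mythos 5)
Your proposal is correct and follows essentially the same route as the paper: the paper likewise treats \eqref{equ.Rayleigh} as an immediate consequence of Definition~\ref{def.gsv} combined with the Hermitian structure of $\breve{\mathbf{S}}$ and the Rayleigh quotient (Rayleigh--Ritz) theorem of \cite{Horn2013}. Your explicit symmetry check on \eqref{equ.s_kl_p}, the positive semi-definiteness observation via \eqref{equ.int_approx}, and the remark on reading the $\mathrm{argmax}$ modulo scalar multiples (or as a selection from the top eigenspace when the largest eigenvalue is not simple) are just careful elaborations of what the paper leaves implicit.
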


\begin{corollary}
	The matrix $\breve{\mathbf{S}}$ can be expressed as the Hadamard product of two matrices as,
	\begin{align}
	&\breve{\mathbf{S}}(\theta,f_c,B) = \breve{\mathbf{S}}^N(\theta,f_c) \circ \breve{\mathbf{S}}^W(\theta,B) \\
	&\breve{s}^N_{k,l} = g_k g_l^* \quad , \; \breve{s}^W_{k,l} = \frac{\sin(\pi B (h_k - h_l))}{\pi B (h_k - h_l)} \label{equ.sn_sw}
	\end{align}
	where the Hadamard product of $\mathbf{A}$ and $\mathbf{B}$ is the element-wise product matrix $\mathbf{A}\circ\mathbf{B} = [a_{k,l}b_{k,l}]$ and N and W superscripts denote narrowband and wideband respectively. 
\end{corollary}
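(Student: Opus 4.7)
The plan is to verify the claimed factorization directly at the level of matrix entries, using the closed-form expression for $\breve{s}_{k,l}$ that was just derived in the uniform-PSD case. From \eqref{equ.s_kl_p} together with the normalization \eqref{equ.smat}, each entry of $\breve{\mathbf{S}}$ takes the form
\begin{equation*}
\breve{s}_{k,l} \;=\; \frac{\sin\!\bigl(\pi B(h_k - h_l)\bigr)}{\pi B(h_k - h_l)}\; e^{j2\pi f_c (h_k - h_l)} .
\end{equation*}
The key observation is that this product factors into two scalars, each depending on the pair $(k,l)$, which is precisely the definition of the $(k,l)$-entry of a Hadamard product $\mathbf{A}\circ\mathbf{B} = [a_{k,l}b_{k,l}]$.

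First I would treat the carrier factor. Splitting $e^{j2\pi f_c(h_k - h_l)} = e^{j2\pi f_c h_k}\,e^{-j2\pi f_c h_l}$ and recalling from \eqref{equ.dis_obs_model} that the components of $\mathbf{g}(f,\theta)$ are $g_k = e^{j2\pi h_k f}$, I would evaluate $\mathbf{g}$ at the center frequency $f_c$. This identifies the rank-one Hermitian matrix $\breve{\mathbf{S}}^N(\theta,f_c) = \mathbf{g}(f_c,\theta)\mathbf{g}^H(f_c,\theta)$ with entries $\breve{s}^N_{k,l} = g_k g_l^*$, which is exactly the narrowband outer product that would appear in \eqref{equ.narrow_model} at $f_c$.

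Next I would identify the envelope factor as $\breve{s}^W_{k,l} = \sin(\pi B(h_k-h_l))/(\pi B(h_k-h_l))$, a real symmetric (hence Hermitian) matrix that depends only on $B$ and on the array/temporal geometry through the differences $h_k - h_l$ defined in \eqref{equ.hk}. Taking the Hadamard product entry-wise and matching with the displayed expression for $\breve{s}_{k,l}$ then completes the proof for every $(k,l)$, establishing $\breve{\mathbf{S}}(\theta,f_c,B) = \breve{\mathbf{S}}^N(\theta,f_c) \circ \breve{\mathbf{S}}^W(\theta,B)$.

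There is no real obstacle here; the statement is a direct algebraic rewriting of \eqref{equ.s_kl_p}. The only bookkeeping point is the normalization convention for $\breve{s}^W_{k,l}$: the raw expression in \eqref{equ.s_kl_p} carries $\sin(\pi B x)/(\pi x)$ with an overall prefactor $\sigma_x^2/B$, whereas the wideband factor in \eqref{equ.sn_sw} uses the normalized sinc $\sin(\pi B x)/(\pi B x)$. The compensating factor $1/B$ is absorbed into the passage from $s_{k,l}$ to $\breve{s}_{k,l}$ in \eqref{equ.smat}, after which the factorization $\breve{s}_{k,l} = \breve{s}^N_{k,l}\,\breve{s}^W_{k,l}$ is immediate.
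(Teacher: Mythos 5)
Your proof is correct and follows exactly the route the paper intends: the corollary is a direct entrywise rewriting of \eqref{equ.s_kl_p} under the normalization \eqref{equ.smat}, with the carrier split $e^{j2\pi f_c(h_k-h_l)} = g_k g_l^*$ giving the rank-one narrowband factor and the normalized sinc giving the wideband factor (the paper states it without proof precisely because it is this immediate). Your remark about the $1/B$ normalization being absorbed in the passage from $s_{k,l}$ to $\breve{s}_{k,l}$ is the only bookkeeping point, and you handled it correctly.
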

%\subsection{A close inspection of Generalized Steering Vector}
\color{MyBlue}
The following theorem clarifies the relation between the aforementioned narrowband and wideband part of the $\breve{\mathbf{S}}$ matrix.
\color{black}
\begin{theorem}\label{the.1}
Let $\lbrace \sigma_i(\breve{\mathbf{S}})\rbrace_{i=1}^L$, $\lbrace \sigma_i(\breve{\mathbf{S}}^N)\rbrace_{i=1}^L$ and $\lbrace \sigma_i(\breve{\mathbf{S}}^W)\rbrace_{i=1}^L$ be the eigenvalues of $\breve{\mathbf{S}}$, $\breve{\mathbf{S}}^N$ and $\breve{\mathbf{S}}^W$ matrices in decreasing order and $\lbrace\breve{\mathbf{u}}_i\rbrace_{i=1}^L$, $\lbrace\breve{\mathbf{u}}_i^N\rbrace_{i=1}^L$ and $\lbrace\breve{\mathbf{u}}_i^W\rbrace_{i=1}^L$
denote the associated eigenvectors respectively.
\begin{enumerate}[label=(\alph*)]
	\item\label{item.a}
	For the narrowband part of the matrix $\breve{\mathbf{S}}$, we have $\sigma_1(\breve{\mathbf{S}}^N)=1$ and $\sigma_i(\breve{\mathbf{S}}^N)=0$ for $i=2,\cdots,L$ and $\breve{\mathbf{u}}_1^N = \mathbf{g}(f,\theta)/\sqrt{L}$.
	\item\label{item.b}
	The eigenvectors of matrix $\breve{\mathbf{S}}$ are the Hadamard product of the narrowband part into the wideband part as $\breve{\mathbf{u}}_i = \mathbf{g} \circ \breve{\mathbf{u}}^W_i$ for $i=1,\cdots,L$.
	\item\label{item.c}
	The $\breve{\mathbf{S}}$ and $\breve{\mathbf{S}}^W$ matrices have identical eigenvalues, $\lbrace \sigma_i(\breve{\mathbf{S}})\rbrace_{i=1}^L = \lbrace \sigma_i(\breve{\mathbf{S}}^W)\rbrace_{i=1}^L$ for $i=1,\cdots,L$.
	\item\label{item.d}
	The SS-Transform of the wideband part of the GSV has the maximum average power in $f\in[-\frac{B}{2},\frac{+B}{2}]$ for all $\mathbf{x}\in\mathbb{C}^L$ and $\|\mathbf{x}\|_2 = 1$.
\end{enumerate}
\end{theorem}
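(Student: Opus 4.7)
The whole theorem hinges on one algebraic observation: because every entry of $\mathbf{g}(f,\theta)$ has unit modulus, the diagonal matrix $\mathbf{D}\triangleq\text{diag}(\mathbf{g})$ is unitary, and the Hadamard factorization in \eqref{equ.sn_sw} can be recast as the unitary similarity
\begin{equation*}
\breve{\mathbf{S}}(\theta,f_c,B)=\mathbf{D}\,\breve{\mathbf{S}}^W(\theta,B)\,\mathbf{D}^H.
\end{equation*}
From this single identity, parts \ref{item.b} and \ref{item.c} follow directly: unitary similarity preserves the spectrum, and each eigenvector $\breve{\mathbf{u}}_i^W$ is carried to $\mathbf{D}\breve{\mathbf{u}}_i^W=\mathbf{g}\circ\breve{\mathbf{u}}_i^W$, which is an eigenvector of $\breve{\mathbf{S}}$ with the same eigenvalue. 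A componentwise check $[\breve{\mathbf{S}}(\mathbf{g}\circ\breve{\mathbf{u}}_i^W)]_k = g_k\sum_l \breve{s}^W_{k,l}u^W_{i,l}$, using $g_l^*g_l=1$, yields the same conclusion without invoking similarity explicitly.

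For part \ref{item.a} I would simply observe that $\breve{\mathbf{S}}^N=\mathbf{g}\mathbf{g}^H$ is a rank-one outer product, so the only non-zero eigenvalue is attained at the normalized eigenvector $\mathbf{g}/\sqrt{L}$ and the remaining $L-1$ eigenvalues are zero; the value $\sigma_1=1$ follows from the unit-modulus normalization of $\breve{s}^N_{k,l}=g_kg_l^*$ implicit in the construction.

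For part \ref{item.d}, the plan is to express the band-averaged power of the SS-transform as a Hermitian quadratic form and invoke Rayleigh--Ritz. Substituting \eqref{equ.SOT} into $P(\mathbf{x})\triangleq B^{-1}\!\int_{-B/2}^{+B/2}|Y_{\mathcal{SS}^m}(f,\theta)|^2\,df$, expanding the square, and interchanging sum and integral leaves the kernel
\begin{equation*}
\frac{1}{B}\int_{-B/2}^{+B/2}\!e^{-j2\pi(h_k-h_l)f}\,df = \frac{\sin(\pi B(h_k-h_l))}{\pi B(h_k-h_l)}=\breve{s}^W_{k,l},
\end{equation*}
so that $P(\mathbf{x})=\mathbf{x}^H\breve{\mathbf{S}}^W\mathbf{x}$. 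Rayleigh--Ritz then identifies the maximizer over $\|\mathbf{x}\|_2=1$ as the leading eigenvector $\breve{\mathbf{u}}_1^W$, which by part \ref{item.b} is precisely the wideband factor of the GSV.

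The main obstacle I foresee is bookkeeping the normalization in part \ref{item.a} and confirming that, after evaluating the Fourier kernel in part \ref{item.d}, the resulting Hermitian matrix is literally $\breve{\mathbf{S}}^W$ (using that the sinc kernel is even) and not some conjugate or transpose of it. Once these normalization and sign issues are settled, the four claims all collapse to the one-line similarity identity together with a single Rayleigh-quotient argument.
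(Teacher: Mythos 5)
Your proposal is correct, and the underlying mechanism is the same one the paper uses, but you package it differently and more cleanly. The paper never writes down the similarity $\breve{\mathbf{S}} = \mathbf{D}\,\breve{\mathbf{S}}^W\mathbf{D}^H$; instead it runs everything through the Rayleigh-quotient characterization \eqref{equ.Rayleigh2}: it represents $\breve{s}_{k,l}$ and $\breve{s}^W_{k,l}$ as integrals of complex exponentials over the band and shows in \eqref{equ.sigma_i} that the quadratic form of $\breve{\mathbf{S}}$ at $\mathbf{u}$ equals that of $\breve{\mathbf{S}}^W$ at $\mathbf{u}'$, where $\mathbf{u} = \mathbf{g}\circ\mathbf{u}'$ --- which is exactly your conjugation by $\mathbf{D} = \mathrm{diag}(\mathbf{g})$, applied inside the variational problem rather than to the matrix itself. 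From that single display the paper reads off \ref{item.b}, \ref{item.c}, the positive semi-definiteness of all three matrices (which Section~\ref{sec.asymptotic} relies on), and, evaluating at the top eigenvector, the power identity \eqref{equ.Jx2} that constitutes \ref{item.d}. Your organization gets \ref{item.b} and \ref{item.c} by pure linear algebra --- making it clear they have nothing to do with the sinc structure of $\breve{\mathbf{S}}^W$ --- and isolates the analytic content (kernel computation plus Rayleigh--Ritz) in part \ref{item.d}, where your integral identity coincides with the paper's \eqref{equ.sigma_i}. The one item the paper's route delivers that yours does not explicitly is positive semi-definiteness; if your write-up is to replace the appendix, add the one-line remark that $\mathbf{x}^H\breve{\mathbf{S}}^W\mathbf{x} = B^{-1}\int_{-B/2}^{B/2}\bigl|\sum_k x_k e^{-j2\pi h_k f}\bigr|^2\,df \ge 0$.

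One genuine flaw, in part \ref{item.a}: the nonzero eigenvalue of $\breve{\mathbf{S}}^N = \mathbf{g}\mathbf{g}^H$ is not $1$, and no ``implicit normalization'' rescues this. Since $|g_k| = 1$ for every $k$, the unique nonzero eigenvalue is $\mathbf{g}^H\mathbf{g} = \mathrm{tr}(\breve{\mathbf{S}}^N) = L$. The claim $\sigma_1(\breve{\mathbf{S}}^N) = 1$ in the theorem statement is in fact inconsistent with the paper's own definition \eqref{equ.sn_sw} and with its narrowband limit \eqref{equ.conv_narrow}, where $\sigma_1(\breve{\mathbf{S}}) \to L$ as $B\to 0$ even though $\breve{\mathbf{S}} \to \breve{\mathbf{S}}^N$ in that limit; the paper's proof quietly sidesteps the issue by establishing only the rank-one statement and the eigenvector formula $\breve{\mathbf{u}}_1^N = \mathbf{g}/\sqrt{L}$. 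So state $\sigma_1(\breve{\mathbf{S}}^N) = L$ (or rescale $\breve{\mathbf{S}}^N$ by $1/L$) rather than inventing a normalization the definition does not contain; your rank-one and eigenvector arguments for \ref{item.a} are otherwise fine.
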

% proof %%%%%%%%%%%%%%%%%%%
\begin{proof}
	See~\ref{app.0}.
\end{proof}
%
%\begin{equation}\label{equ.Jx1}
%J(\mathbf{x}) = \int_{-\frac{B}{2}}^{+\frac{B}{2}} \left| \sum_{k} x_k^W e^{-j2\pi h_k u} \right|^2 \cdot du
%\end{equation}
%Denoting $U^W_{SS^m}$ as SS-Transform of the wideband part of the eigenvector $\breve{\mathbf{u}}^W$, expression for $\sigma(\breve{\mathbf{S}})$ in \eqref{equ.sigma_i}, is in-fact integral of the power of the $U^W_{SS^m}$ over the whole bandwidth,
%It does not mean integral over a two-dimensional function, since the GSV is always computed at a particular $\theta$.
%Consequently, the essential part in the GSV is the wideband part and $\mathbf{g}(f,\theta)$ acts only as a career term.
%Although finding a closed-form approximation for $\mathbf{v}_g$ is not the main issue of this paper, 
%The last property is important because it provides a procedure for closed-form approximation of the GSV.% This will be explored in future works.
\subsection{Asymptotic Behavior of matrix $\breve{\mathbf{S}}$ } \label{sec.asymptotic}
%One of the important aspects of matrix $\breve{\mathbf{S}}$ is its behavior in asymptotic bandwidth situations.
%In narrowband case, when $B\to 0$, reveals the appellation philosophy of the Generalized Steering Vector (GSV). On the other hand considering $B\to \infty$ shows ultra-wideband behavior for a specific array geometry.
Denoting $\lbrace \sigma_i(\mathbf{A}) \rbrace_{i=1}^{L}$ as the set of non-increasingly ordered eigenvalues of $\mathbf{A}$, we have,
\begin{align}
& \text{tr}(\breve{\mathbf{S}}) = \sum_{i=1}^{L} \breve{s}_{ii} = \sum_{i=1}^{L} \sigma_i(\breve{\mathbf{S}}) 
\end{align}
Moreover, matrices $\breve{\mathbf{S}}^N$, $\breve{\mathbf{S}}^W$ and $\breve{\mathbf{S}}$ are all positive semi-definite (see \ref{app.0}) , so we have,
\begin{equation}
\sigma_1(\breve{\mathbf{S}})\ge \sigma_2(\breve{\mathbf{S}})\ge \cdots \ge \sigma_L(\breve{\mathbf{S}})\ge 0
\end{equation}
In the narrowband case where $B \to 0$, the $\breve{\mathbf{S}}^W \to \mathbf{1}\mathbf{1}^T$, where $\mathbf{1} $ is a vector with all elements equal to one. 
Therefore, $\lim_{B \to 0}\breve{\mathbf{u}}_1(\theta,f_c,B)=\mathbf{g}(\theta,f_c)$.
Letting $m=1$ in \eqref{equ.hk}, we can write,
\begin{align}
&\lim_{B \to 0} \breve{s}_{k,l} = e^{j2\pi f_c (h_k - h_l)} = e^{j2\pi \frac{d}{\lambda}\sin\theta(k-l)} \nonumber\\
&\Rightarrow \lim_{m=1 , B \to 0}\breve{\mathbf{S}}(\theta,f_c,B) = \mathbf{a}(\theta,f_c)\mathbf{a}^H(\theta,f_c) \label{equ.S_narrow}
\end{align}
This reveals the motivation behind the naming of $\breve{\mathbf{u}}_1$ as the \textit{generalized steering vector}. 
%Obviously for $m > 1$, $\lim_{B \to 0}\breve{\mathbf{u}}_1(\theta,f_c,B)=\mathbf{g}(\theta,f_c)$.
%which stands for a single pixel in the frequency-DOA image.
Finally, in narrowband case all energy is concentrated on the first eigenvector,
\begin{equation} \label{equ.conv_narrow}
\lim_{B \to 0} \sigma_i(\breve{\mathbf{S}}) = \left\lbrace \begin{array}{ll}
L & i=1 \\
0 & i=2,\cdots,L
\end{array}\right.
\end{equation}

On the other hand for the case when bandwidth is very large, i.e. $B \to \infty$  , the $\breve{\mathbf{S}}^W$ matrix tends to a constant block diagonal matrix. Let the sampling frequency $f_s = \nu B$, then,
\begin{align}
&\lim_{B \to \infty} \breve{\mathbf{S}}^W(\theta,B) = \left[\begin{array}{ccc}
\breve{\mathbf{S}}^\infty & & \mathbf{0} \\
& \ddots &  \\
\mathbf{0} 		  & & \breve{\mathbf{S}}^\infty
\end{array}\right] \\
& \breve{s}^\infty_{k,l} = 
\frac{\sin\left(\pi \frac{|k-l|}{\nu} \right)}{\pi \frac{|k-l|}{\nu}} \; , \quad k,l\in\lbrace1,\cdots,m\rbrace
\end{align}
Recall that the eigenvalues of a block diagonal matrix are equal to the union of its diagonal submatrices eigenvalues:
\begin{equation}
%\mathbf{A}=\left[\begin{array}{ccc}
%\mathbf{A}_{11} & & \mathbf{0} \\
% & \ddots & \\
% \mathbf{0} & & \mathbf{A}_{nn}
%\end{array}\right] \Rightarrow 
\lbrace \sigma_k(\mathbf{A})\rbrace_{k=1}^{nm} = \bigcup_{i=1}^{n} \lbrace \sigma_k(\mathbf{A_{ii}})\rbrace_{k=1}^{m}
\end{equation}
Therefore, the eigenvalues of $\breve{\mathbf{S}}^W$ converge to a permutation of the $\lbrace\sigma_i(\breve{\mathbf{S}}^\infty)\rbrace_{i=1}^m$ as $B\to\infty$. 
\begin{equation} \label{equ.sigma_converge}
\lim_{B \to \infty} \sigma_i(\breve{\mathbf{S}}) = \left\lbrace \begin{array}{ll}
\sigma_1(\breve{\mathbf{S}}^\infty) & i=1,\cdots,N_S \\
\sigma_2(\breve{\mathbf{S}}^\infty) & i=N_S+1,\cdots,2N_S \\
\vdots & \\
\sigma_m(\breve{\mathbf{S}}^\infty) & i=L-N_S+1,\cdots,L \\
\end{array}\right.
\end{equation}
Accordingly, as the bandwidth increases, the dominant eigenvalue, corresponding to the GSV, decreases and the rest of the eigenvalues increase in such a way that all tend to form a smooth spectrum. 
%In other words, in ultra-wideband scenarios, any subspace based solution applying $\breve{\mathbf{u}}_1$ as the representative of the signal subspace, can utilize $\breve{\mathbf{u}}_2,\cdots,\breve{\mathbf{u}}_{N_S}$ alternately.
An insightful example is depicted in Fig.~\ref{Fig.B_to_inf}, where $f_0$ is the maximum spatial-aliasing free frequency of the array. It shows the convergence discussed in \eqref{equ.conv_narrow} and \eqref{equ.sigma_converge}.
\begin{figure}%[!t]
	\centering
	\includegraphics[width=0.64\linewidth]{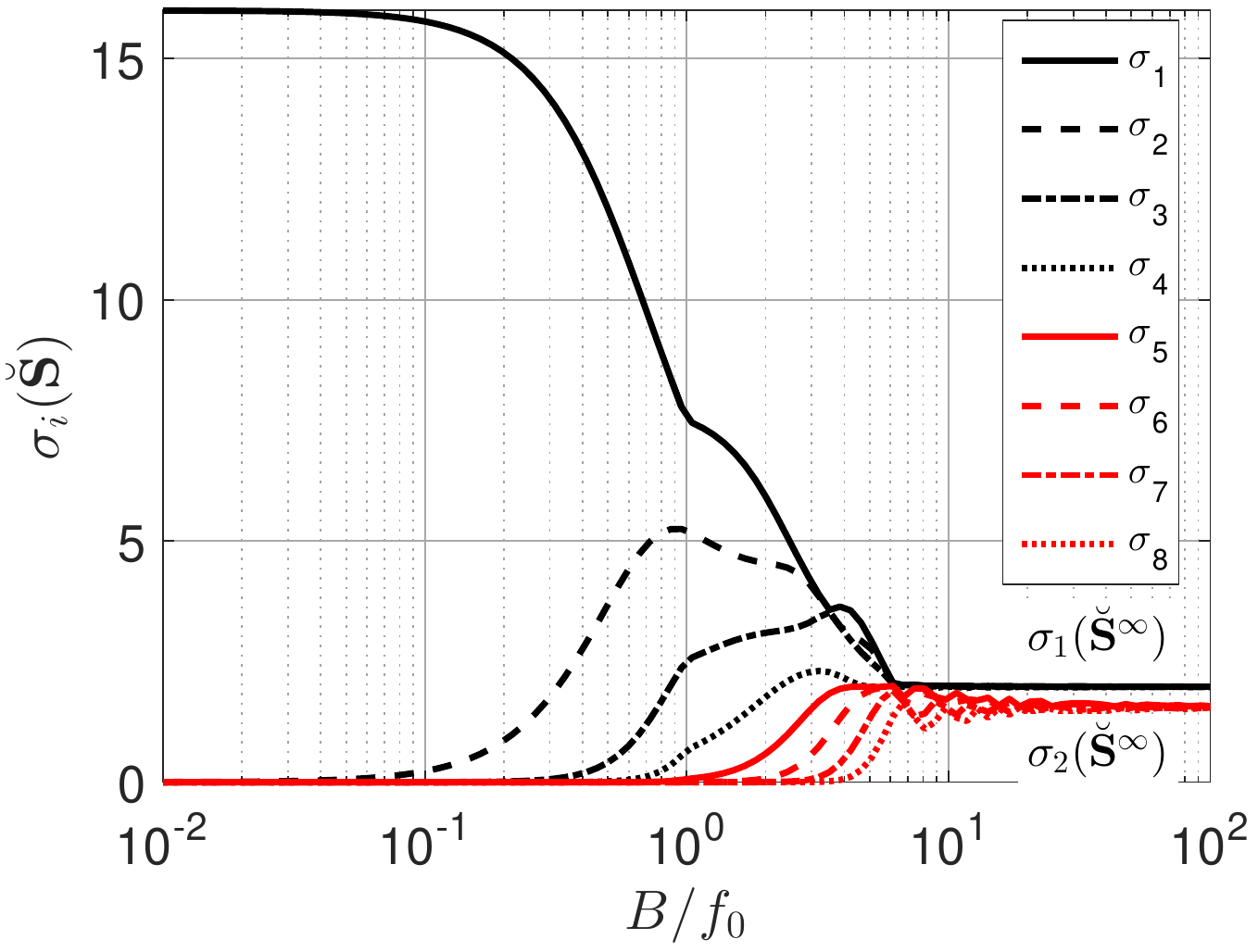}
	\caption{Asymptotic behavior of $\sigma_i(\breve{\mathbf{S}})$ for a ULA configuration and with spacing $d=c/2f_0$ , $m=4$, $N_S=4$ , $L=16$ , $\theta=40^\circ$ and $\nu=2$.}
	\label{Fig.B_to_inf}
	% source: code3/B_to_infinity.m
\end{figure}
%\vspace*{-15pt}
\subsection{Algorithms} \label{sec.algorithms}
%\color{MyBlue}
In this part, two DOA estimation algorithms using the concept of modal orthogonality are proposed. Modal components of a wideband source at direction $\theta$ is attributed to the eigenvectors of $\breve{\mathbf{S}}(\theta)$ corresponding to non-zero eigenvalues. Sources' DOAs are locations that the modal components are orthogonal to the noise subspace $\mathbf{U}_n$.
%\color{black}
%The main advantage of these algorithms is that they require no a priori information with reasonable computational complexity.
%The proposed approach relies on the obtained approximate matrix $\breve{\mathbf{S}}$. 
%Since $\breve{\mathbf{S}}$, encompasses the whole broadband modal components, its orthogonality to the noise subspace can be used for extracting the true DOAs. 
We refer to this solution as wideband modal orthogonality (WIMO).
%\color{MyBlue}
Naturally, $\mathbf{U}_n$ is substituted with $\hat{\mathbf{U}}_n$ in the noisy observation case.
%\color{black}
%Based on the number of modal components participating in the orthogonality test, different methods can be developed. 
Pure-WIMO\footnote{
	\color{MyBlue}
	The choice of '\textit{Pure}' prefix was due to the use of whole $\breve{\mathbf{S}}$ matrix in the orthogonality test without its modal decomposition.
	\color{black}} (p-WIMO) utilizes the whole $\breve{\mathbf{S}}$ matrix as the signal subspace. The spatial spectrum of p-WIMO is formulated as follows,
\begin{equation}\label{equ.pure-wico}
P_\text{p-WIMO}(\theta) = \frac{1}{\text{tr}\left(\hat{\mathbf{U}}_n^H \breve{\mathbf{S}}(\theta) \hat{\mathbf{U}}_n\right)}
\end{equation}
%Here, $\hat{\mathbf{U}}_n$ is the estimated noise subspace in \eqref{equ.Syy_hat}. 
The denominator of p-WIMO can be rewritten as,
\begin{equation}\label{equ.wico-denom}
\text{tr}\left(\hat{\mathbf{U}}_n^H \breve{\mathbf{S}}(\theta) \hat{\mathbf{U}}_n\right) = 
\sum_{l=1}^{L} \sigma_l(\breve{\mathbf{S}}) \sum_{k=P+1}^{L} \left\| \hat{\mathbf{u}}^H_{n,k} \breve{\mathbf{u}}_l \right\|^2
\end{equation}
where $\hat{\mathbf{u}}_{n,k}$ denotes the $k$th column of $\hat{\mathbf{U}}_n$. This expression shows that each modal component contribution is proportional to its corresponding eigenvalue, i.e., weak eigenvectors will have a negligible effect on the nullity of denominator.
The second algorithm, named 1-WIMO\footnote{\color{MyBlue} '\textit{1}' prefix denotes using the first modal component in the orthogonality test. \color{black}}, tests only the GSV (defined in Def.~\ref{def.gsv}) orthogonality to the noise subspace. The spatial spectrum of the 1-WIMO can be written as,
\begin{equation} \label{equ.1-wimo}
P_\text{1-WIMO}(\theta) = \frac{1}{\breve{\mathbf{u}}_1(\theta)^H \hat{\mathbf{U}}_n\hat{\mathbf{U}}_n^H \breve{\mathbf{u}}_1(\theta)}
\end{equation}
According to the asymptotic behavior of $\breve{\mathbf{S}}$ in narrowband situation (see section~\ref{sec.asymptotic}), 1-WIMO tends toward the celebrated MUSIC algorithm as $B\to 0$. 
It is noteworthy that both $\breve{\mathbf{S}}(\theta)$ and $\breve{\mathbf{u}}_1(\theta)$ computations can be performed offline for arbitrary DOAs.
Moreover, WIMO requires one EVD for the whole band and no subband processing is involved. These properties make WIMO a practically attractive approach with tractable computational complexity for real-time applications. 
%\color{MyBlue}
A step by step description of the proposed algorithms are expressed in Algorithm~\ref{alg.WIMO}.
\begin{algorithm} [!t]
	\caption{p-WIMO and 1-WIMO wideband DOA estimators.}
	\label{alg.WIMO}
	\begin{algorithmic}[1]%[noend]
\Statex \Comment Doing offline calculations
\Procedure{WimoOffline}{$B$,$f_c$,$\mathbf{z}$,$m$,$\Theta$} 

\State Compute $\breve{\mathbf{S}}(\theta,B,f_c)$ for $\theta\in\Theta$ points using \eqref{equ.smat} (in non-uniform PSD use \eqref{equ.non-uniform} instead).

\State Extract $\breve{\mathbf{u}}_1(\theta)$ as the eigenvector of $\breve{\mathbf{S}}(\theta,B,f_c)$ corresponding to largest eigenvalue.
\EndProcedure
%%%%%%%%%%%%%%%%%%%%%%%%%%%%%%%%%%%%%%%%%%%%%%%%%%%%%%%%%%

\Comment Doing online calculations
\Procedure{WimoOnline}{$\mathbf{y}(t)$,$m$,$\breve{\mathbf{S}}(\Theta)$,$\breve{\mathbf{u}}_1(\Theta)$,
	alg} 
\State Form $\tilde{\mathbf{y}}(t)$ using \eqref{equ.stom}.
\State Estimate $\hat{\mathbf{S}}_{\tilde{\mathbf{y}} \tilde{\mathbf{y}}}$ using \eqref{equ.Syy_hat}.

\State Compute $\hat{\mathbf{U}}_n$ using \eqref{equ.Syy_EVD}.

\If {alg = p-WIMO}
\State Compute $P(\theta)$ using \eqref{equ.pure-wico}.

\ElsIf{alg = 1-WIMO}
\State Compute $P(\theta)$ using \eqref{equ.1-wimo}.
\EndIf

\EndProcedure
\end{algorithmic}
\end{algorithm}
%\color{black}
%although the Generalized Steering Vector (GSV) computation requires EVD and, excluding rare cases, the eigenvectors of a matrix does not have closed-form expression, but this computations can be carried out offline for an array at arbitrary DOAs. Therefore, real time applications would be computationally tractable.
\vspace*{-5pt}
\subsection{WIMO Parameter Selection} \label{sec.parameter}
The two parameters $m$ (length of temporally delayed samples) and $P$ (number of eigenvalues corresponding to the signal subspace) are required in \eqref{equ.stom} and \eqref{equ.Syy_EVD} respectively.
Since the signal subspace of broadband source does not lie in a 1-dimensional subspace, and the value of $m$  directly determines the observation space dimension, $m$ should be chosen large enough so that the following constraint always holds,
\begin{equation}\label{equ.m_const1}
\varepsilon^{(\mathcal{K})} < mN_S
\end{equation}
where $\varepsilon^{(\mathcal{K})}$ is the effective signal subspace dimension in the presence of $\mathcal{K}$ sources.
In \cite{buckley1988broad}, an upper bound is introduced for $\varepsilon^{(1)}$ and $\varepsilon^{(\mathcal{K})}$ as,
\begin{align}
\hat{\varepsilon}_\text{BASS-ALE}^{(1)} &= m + N_S \\
\hat{\varepsilon}^{(\mathcal{K})}_\text{BASS-ALE} &= \mathcal{K} \cdot \hat{\varepsilon}_\text{BASS-ALE}^{(1)} \label{equ.eps_bass_ale}
\end{align}
We propose using $\breve{\mathbf{S}}$, to approximate $\varepsilon^{(\mathcal{K})}$ as follows,
\begin{equation} \label{equ.eps_hat}
\hat{\varepsilon}^{(\mathcal{K})} = \text{rank}\left(
\sum_{k=1}^{\mathcal{K}}
\breve{\mathbf{S}}(\theta_k , f_{c} , B)\right)
\end{equation}
%\color{MyBlue}
\eqref{equ.eps_hat} in the narrowband case obviously equals to $\mathcal{K}$.
To account for the maximum probable time-bandwidth product in \eqref{equ.eps_hat}, 
\color{black}
it is enough to set $\theta$=$90^\circ$ in \eqref{equ.eps_hat} as,
\begin{equation} \label{equ.eps_hat_max}
\hat{\varepsilon}^{(\mathcal{K})}_\text{max} = \mathcal{K} \cdot \text{rank}\left(
\breve{\mathbf{S}}(90^\circ , f_{c} , B)\right)
\end{equation}
Empirical results show that the proposed algorithms are not sensitive to $m$, and $m\ge 2$ subject to the aforementioned constraint \eqref{equ.m_const1} leads to similar results.

In the selection of $P$ parameter, two approaches can be considered. The first one is to set $P=\hat{\varepsilon}^{(\mathcal{K})}$. This solution achieves best performance since it incorporates complete sources information in the signal subspace estimation. But, this approach is infeasible in real applications due to the unknown sources' direction of arrivals. The second solution is to exploit well-known order estimation methods such as minimum description length (MDL) criterion \cite{Kailatha1984}. This approach has a straightforward implementation but always underestimates signal subspace dimension due to small eigenvalues which are smaller than the noise floor. 
The performance degradation due to underestimation of $P$ is larger in p-WIMO. Since p-WIMO incorporates the whole approximated signal subspace, some weaker eigenvectors of $\hat{\mathbf{S}}_{\tilde{\mathbf{y}} \tilde{\mathbf{y}}}$ remain in $\hat{\mathbf{U}}_n$, and then the orthogonality test between $\hat{\mathbf{U}}_n$ and $\breve{\mathbf{S}}(\theta)$ adds up some small non-orthogonal terms from signal subspace. Therefore, to achieve a better performance for WIMO (especially in ultra-wideband scenarios) a manual increment of $P$ relative to the $P_\text{MDL}$ is recommended. The following selection rules are derived empirically,
\begin{enumerate}
	\item
	$P > \max\{P_\text{MDL} , \hat{\varepsilon}^{(1)}_\text{max}\}$.
	\item 
	$0.2 \le P/L \le 0.6$ for 1-WIMO and $0.5 \le P/L \le 0.7$ for p-WIMO.
	%\item 
	%Preferably $L \ge 1.2 \hat{\varepsilon}^{(\mathcal{K})}_\text{max}$, where $\mathcal{K}$ can be set to an upper bound of the sources' number.
\end{enumerate}
%\color{MyBlue}
\subsection{WIMO for non-uniform spectrum} \label{sec.non-uniform}
In many applications such as active radars and sonars and also broadband communications, the signal power spectrum is known but is not uniformly distributed in the bandwidth. 
Moreover, in non-cooperative scenarios, the sources' spectrum may be estimated from the observations prior to DOA estimation.
In these cases, the simplifying assumption on uniform power spectral density in \eqref{equ.temp1} is not accurate and would cause a mismatch loss in DOA estimation. To extend WIMO in such situations, it is sufficient to compute $\breve{\mathbf{S}}(\theta,f_c,B)$ from the generic equation \eqref{equ.int_approx} without uniform spectrum assumption. Then, the approximation matrix entries are obtained through the following numerical integration,
\begin{equation} \label{equ.non-uniform}
\breve{s}_{k,l} = \int_{f_l}^{f_h} S(f) e^{j2\pi (h_k - h_l)f} \cdot df
\end{equation}
where $S(f)$ can be replaced with estimated PSD in non-cooperative case. Also, if any information about source's correlation function is available, equation \eqref{equ.corr-function} can be applied directly.
Finally, 
%\color{MyBlue}
except the line~2 of Algorithm~\ref{alg.WIMO}, other parts of the WIMO algorithm remain unchanged for uniform and nonuniform PSD.
\color{black}
Note that (as in the uniform spectrum case) numerical integration in \eqref{equ.non-uniform} can be computed offline.

%\color{MyBlue}
\subsection{WIMO computational complexity} \label{sec.complexity}
Table~\ref{table.complexity} details the computational complexity of the proposed algorithms. 
%Note that the proposed methods require only one EVD for the whole band, while regular subspace-based schemes \cite{Kailatha1984,Wang1985,Yoon2006a,Shaw2016} involve EVD for all subbands after frequency binning. 
It is shown that WIMO computational complexity is approximately of order $\mathcal{O}(m^3 N_S^3)$, then excessive increase of $m$ (beyond the constraint in \eqref{equ.m_const1}) can make WIMO runtime larger than those methods involving subband processing. Moreover, in comparison to p-WIMO, 1-WIMO has less complexity.
%sparsest solution recovery procedure are the most important reasons that make the existing wideband DOA estimators computationally expensive.
\begin{table}[!t]
	\scriptsize
	\centering
	\begin{threeparttable}[b]
		\caption{WIMO computational complexity analysis.}
		\label{table.complexity}
		\begin{tabular}{p{0.37\linewidth}p{0.12\linewidth}p{0.35\linewidth}}
\toprule[1.0pt]
% line1 %%%%%%%%%%%%%%%%%%%%%%%%%%%%%%%%%%%%%%%%%%%%%%%%%%%%%%%%%%%%%%%%%%%%%%%%%%%%%%%%%%%
Operation & Output & Complexity order \tabularnewline
\midrule
% line2 %%%%%%%%%%%%%%%%%%%%%%%%%%%%%%%%%%%%%%%%%%%%%%%%%%%%%%%%%%%%%%%%%%%%%%%%%%%%%%%%%%%
STCM approximation \eqref{equ.Syy_hat}	& $\hat{\mathbf{S}}_{\tilde{\mathbf{y}} \tilde{\mathbf{y}}}$
& $\mathcal{O}((M-m+1)L^2)$ \tabularnewline
% line3 %%%%%%%%%%%%%%%%%%%%%%%%%%%%%%%%%%%%%%%%%%%%%%%%%%%%%%%%%%%%%%%%%%%%%%%%%%%%%%%%%%%
Noise subspace estimation \eqref{equ.Syy_EVD} & $\hat{\mathbf{U}}_n$ 
& $\mathcal{O}(L^3 + (L\log^2L)\log b)$\tnote{$\dag$} \tabularnewline
% line4 %%%%%%%%%%%%%%%%%%%%%%%%%%%%%%%%%%%%%%%%%%%%%%%%%%%%%%%%%%%%%%%%%%%%%%%%%%%%%%%%%%%
p-WIMO pseudo-spectrum & $P_\text{p-WIMO}(\theta)$ 
& $\mathcal{O}( (L-P)L^2 + L(L-P)^2 )$ \tabularnewline
% line5 %%%%%%%%%%%%%%%%%%%%%%%%%%%%%%%%%%%%%%%%%%%%%%%%%%%%%%%%%%%%%%%%%%%%%%%%%%%%%%%%%%%
1-WIMO pseudo-spectrum & $P_\text{1-WIMO}(\theta)$ 
& $\mathcal{O}( (L+1)(L-P) )$ \tabularnewline
\bottomrule[1.0pt]	
\end{tabular}

\begin{tablenotes}
\item[$\dag$] $b=64$ for double precision. See \cite{Pan1999} for details.
\end{tablenotes}
	\end{threeparttable}
\end{table}
WIMO's main computational cost is related to the eigenvalue decomposition of STCM.
The main advantage of WIMO over some other methods is that the signal subspace approximation is done offline and noise subspace estimation is obtained through a single STCM decomposition for the whole band and all DOAs. 
%In contrast, TOPS requires EVD at each subband and STEP involves EVD at each test angle.

%\color{MyBlue}
A numerical example comparing the runtime of the different methods in a definite scenario is presented in the next section. Moreover, since the main computational load in subspace-based schemes belongs to singular value decomposition (SVD), a comparison of number of required SVD among these methods can be insightful. Recall that the computational complexity of $\mathtt{svd}(\mathbf{A}_{l\times p})$ is $\mathcal{O}(4l^2 p + 22p^3)$ \cite{Golub1996}. Table~\ref{table.svd} details this comparison, in which $N_{bin}$ represents the number of frequency bins computed as $N_{bin}=\left[\frac{B}{f_s} N_{FFT}\right]$, $N_{FFT}$ is the number fast Fourier transform (FFT) points, $\left[\cdot\right]$ is rounding operator and $N_\theta$ is the number DOAs in which spatial spectrum is calculated.
\begin{table}[!t]
	\scriptsize
	\centering
	\begin{threeparttable}[b]
		\caption{Comparison of some of the subspace-based wideband DOA estimation methods in term of the number of required singular value decomposition.}
		\label{table.svd}
		\begin{tabular}{p{0.22\linewidth}p{0.4\linewidth}}
\toprule[1.0pt]
% line1 %%%%%%%%%%%%%%%%%%%%%%%%%%%%%%%%%%%%%%%%%%%%%%%%%%%%%%%%%%%%%%%%%%%%%%%%%%%%%%%%%%%
Method & Number of SVD \tabularnewline
\midrule
% line2 %%%%%%%%%%%%%%%%%%%%%%%%%%%%%%%%%%%%%%%%%%%%%%%%%%%%%%%%%%%%%%%%%%%%%%%%%%%%%%%%%%%
1-WIMO	& $\mathtt{svd}(\mathbf{A}_{mN_S\times mN_S})$ \tabularnewline
% line3 %%%%%%%%%%%%%%%%%%%%%%%%%%%%%%%%%%%%%%%%%%%%%%%%%%%%%%%%%%%%%%%%%%%%%%%%%%%%%%%%%%%
p-WIMO	& $\mathtt{svd}(\mathbf{A}_{mN_S\times mN_S})$ \tabularnewline
% line3 %%%%%%%%%%%%%%%%%%%%%%%%%%%%%%%%%%%%%%%%%%%%%%%%%%%%%%%%%%%%%%%%%%%%%%%%%%%%%%%%%%%
BASS-ALE \cite{buckley1988broad}	& $\mathtt{svd}(\mathbf{A}_{mN_S\times mN_S})$ \tabularnewline
% line4 %%%%%%%%%%%%%%%%%%%%%%%%%%%%%%%%%%%%%%%%%%%%%%%%%%%%%%%%%%%%%%%%%%%%%%%%%%%%%%%%%%%
CSSM\tnote{$\dag$} \cite{Wang1985}	& $(N_{bin}+1)\mathtt{svd}(\mathbf{A}_{N_S\times N_S})$ \tabularnewline
% line5 %%%%%%%%%%%%%%%%%%%%%%%%%%%%%%%%%%%%%%%%%%%%%%%%%%%%%%%%%%%%%%%%%%%%%%%%%%%%%%%%%%%
IMUSIC \cite{Kailatha1984}	& $N_{bin}\mathtt{svd}(\mathbf{A}_{N_S\times N_S})$ \tabularnewline
% line6 %%%%%%%%%%%%%%%%%%%%%%%%%%%%%%%%%%%%%%%%%%%%%%%%%%%%%%%%%%%%%%%%%%%%%%%%%%%%%%%%%%%
TOPS \cite{Yoon2006a}	& $N_{bin}\mathtt{svd}(\mathbf{A}_{N_S\times N_S}) + 
		  N_\theta\mathtt{svd}(\mathbf{A}_{\mathcal{K}\times (N_{bin}-1)(N_S-\mathcal{K})})$ \tabularnewline
% line7 %%%%%%%%%%%%%%%%%%%%%%%%%%%%%%%%%%%%%%%%%%%%%%%%%%%%%%%%%%%%%%%%%%%%%%%%%%%%%%%%%%%
Squ-TOPS \cite{Okane2010}& $N_{bin}\mathtt{svd}(\mathbf{A}_{N_S\times N_S}) + 
N_\theta\mathtt{svd}(\mathbf{A}_{\mathcal{K}\times (N_{bin}-1)\mathcal{K}})$ \tabularnewline
% line8 %%%%%%%%%%%%%%%%%%%%%%%%%%%%%%%%%%%%%%%%%%%%%%%%%%%%%%%%%%%%%%%%%%%%%%%%%%%%%%%%%%%
STEP\tnote{*} \cite{Yin2018b}	& $N_{\theta}\mathtt{svd}(\mathbf{A}_{mN_S\times mN_S})$ \tabularnewline
%%%%%%%%%%%%%%%%%%%
\bottomrule[1.0pt]	
\end{tabular}

\begin{tablenotes}
	\item[$\dag$] Assuming the number of source's and their DOAs' pre-estimates are available.
	\item[*] With given number of sources.
\end{tablenotes}
	\end{threeparttable}
\end{table}
\color{black}
\vspace*{-5pt}

\section{Simulation Results} \label{sec.simulation}
In this section, the simulation results for the proposed wideband DOA estimators and IMUSIC \cite{Kailatha1984}, CSSM \cite{Wang1985}, TOPS \cite{Yoon2006a}, Squared-TOPS \cite{Okane2010}, BASS-ALE \cite{buckley1988broad}, STEP \cite{Yin2018b}, 
\color{MyBlue} spatial-only \cite{Agrawal2000} \color{black}
and also sparse wideband DOA estimators  $\ell_1$-SVD \cite{malioutov2005sparse}, W-CMSR \cite{liu2011direction} and, W-LASSO \cite{Hu2012a} are presented.
%For CSSM and WAVES methods, a priori focusing angles are supplied through an incoherent MVDR and rotational subspace focusing matrix \cite{Kaveh1988} is applied. In WSSP method, a tuning parameter $\beta$ is required which is set to 0.8 for all simulations.
A ULA with 8 omnidirectional sensors and element spacing $d=\lambda_{min}/2$, is considered. The wave propagation velocity $c$, is equal to 1500$(m/s)$ which corresponds to propagation speed of underwater acoustic. Wideband sources are simulated as a colored Gaussian processes. Two metrics are utilized to measure the frequency broadness of a signal; bandwidth ratio denoted by $\eta$ and bandwidth scale expressed by $\gamma$. For a given wideband signal with frequency content $f\in [f_l, f_h]$:
\begin{equation}\label{equ.BF}
\eta \triangleq 2\left(\frac{f_h - f_l}{f_h + f_l}\right) \quad , \quad
\gamma \triangleq \frac{f_h}{f_l}
\end{equation}
obviously $0\le\eta\le 2$ and $\gamma \ge 1$. %Since $\eta$ converges to 2 as $\gamma \to \infty$, the bandwidth scale $\gamma$ seems a better index of frequency broadness for ultra-wideband signals.

In the first numerical example, the eigenvalues of approximation matrix $\breve{\mathbf{S}}$, introduced in \eqref{equ.smat}, is compared with the STCM $\hat{\mathbf{S}}_{\tilde{\mathbf{y}} \tilde{\mathbf{y}}}$. In this simulation, $f_l$=1.5KHz and $f_h$=4.5KHz which corresponds to $\eta$=100\%. Sampling frequency $f_s$=10KHz, $m$=5 and $\sigma_n^2$=0, until all signal eigenvalues become observable. The results are illustrated in Fig.~\ref{Fig.EigenValues_Comparison}. It is shown that for a single and two wideband sources, $\lbrace \sigma_k(\breve{\mathbf{S}})\rbrace_{k=1}^L$ are in a close match with $\lbrace \sigma_k(\hat{\mathbf{S}}_{\tilde{\mathbf{y}} \tilde{\mathbf{y}}})\rbrace_{k=1}^L$. The description of the two scenarios are given in the figure's caption.
\begin{figure}%[!t]
	\centering
	\includegraphics[width=0.64\linewidth]{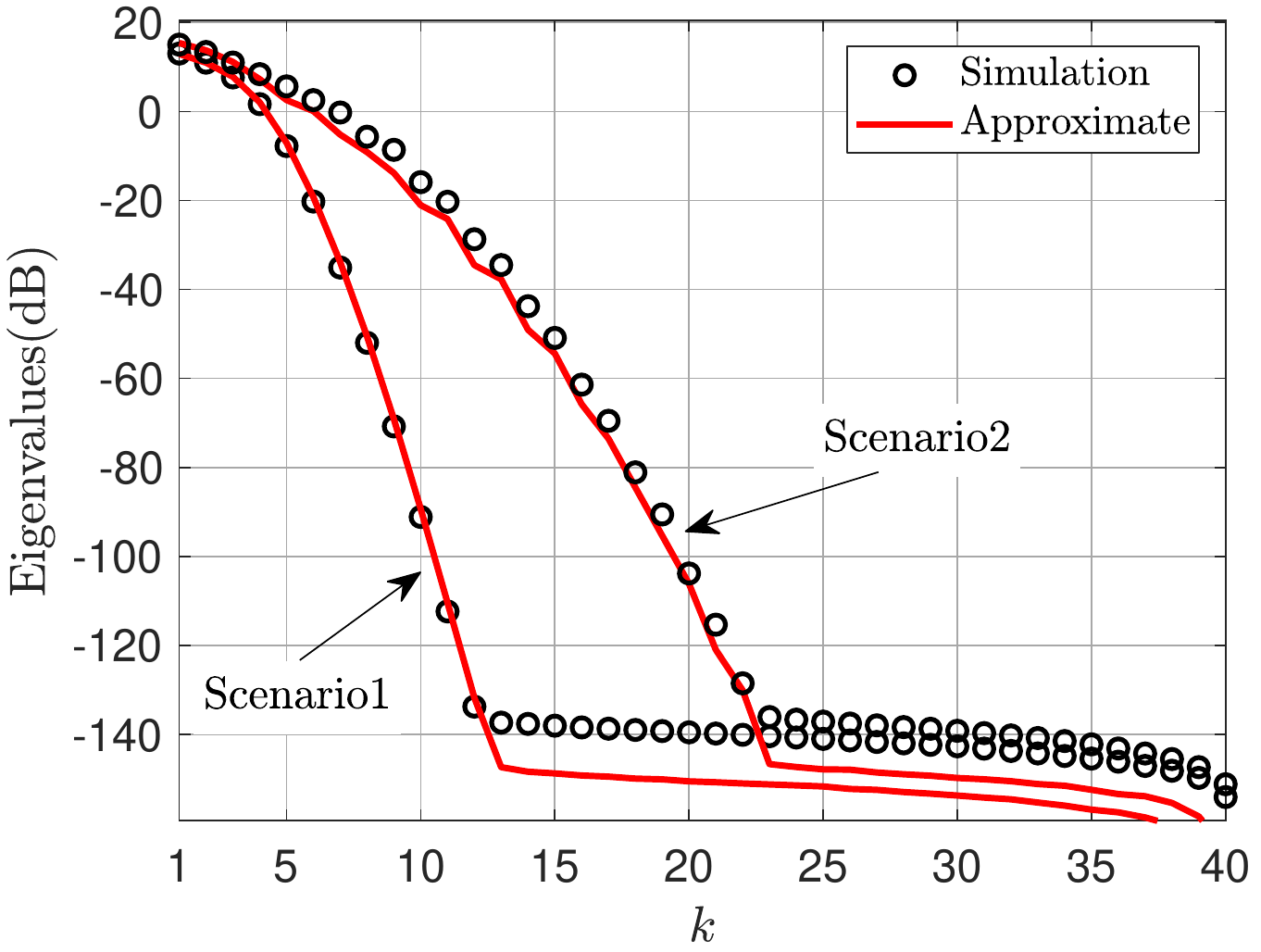}
	\caption{Comparison between eigenvalues of $\breve{\mathbf{S}}$ and $\hat{\mathbf{S}}_{\tilde{\mathbf{y}} \tilde{\mathbf{y}}}$. In scenario-1 single source is located at $\theta=40^\circ$ and in scenario-2 two equal power sources are at $\theta=[40^\circ,60^\circ]$. Eigenvalues of $\hat{\mathbf{S}}_{\tilde{\mathbf{y}} \tilde{\mathbf{y}}}$ are averaged over 100 runs.}
	\label{Fig.EigenValues_Comparison}
	% source: code3/Syy_Error_Analysis.m
\end{figure}
In the second example, the effective signal subspace dimension $\varepsilon$, against signal bandwidth is investigated. The upper bound \eqref{equ.eps_bass_ale}, introduced in \cite{buckley1988broad}, our proposed upper bound in \eqref{equ.eps_hat_max} and, the estimate of effective dimension proposed in \eqref{equ.eps_hat} are compared with the true one. In this example, the true $\varepsilon$ is set to $\mathrm{rank}(\hat{\mathbf{S}}_{\tilde{\mathbf{y}} \tilde{\mathbf{y}}})$, sources DOAs are $40^\circ$ and $60^\circ$ and $f_h$=4.5KHz. The result is illustrated in Fig.~\ref{Fig.Rank_Comparison}. It is shown that the proposed $\hat{\varepsilon}^{(\mathcal{K})}$ leads to better approximation of the effective signal subspace dimension and also $\hat{\varepsilon}^{(\mathcal{K})}_\text{max}$ results in tighter upper bound compared to the constant upper bound proposed in \cite{buckley1988broad}.
\begin{figure}%[H]
	\centering
	\includegraphics[width=0.64\linewidth]{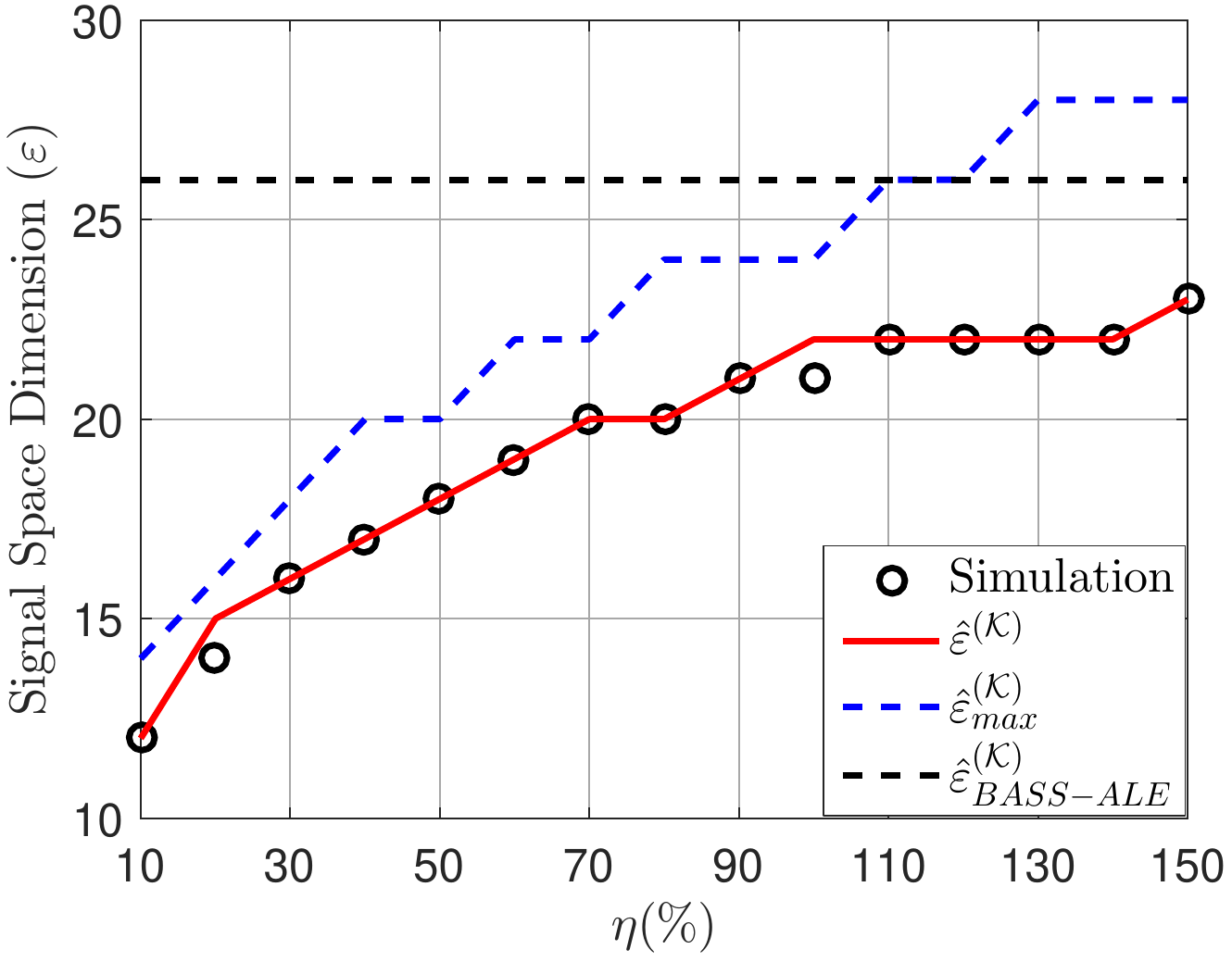}
	\caption{The effective broad-band signal subspace dimension $\varepsilon$ versus bandwidth ratio $\eta$.} 
	\label{Fig.Rank_Comparison}
	% source: code3/TBP_Analysis.m
\end{figure}
In the third numerical example, we consider the effect of temporal lag order $m$ in the 1-WIMO and p-WIMO on resulting RMSE, for three different bandwidth ratio 25\%, 50\% and, 100\%. $f_h$=4.5KHz, SNR=20dB , $m$=5 and $P$ is set to $P_\text{MDL}$ for both methods. The result is illustrated in Fig.~\ref{Fig.MSE_vs_m}. It is noteworthy that, the RMSE improvement for $m>1$ increases with bandwidth ratio. In other words, the effect of spatial-temporal observation model ($m>1$) compared to the simple array snapshot ($m$=1) becomes more sensible as the bandwidth ratio increases. As mentioned in Section~\ref{sec.parameter}, the proposed methods show little sensitivity for $m\ge 2$.% provided that \eqref{equ.m_const1} holds.
\begin{figure}%[!t]
	\centering
	\subfloat[1-WIMO]{
		\includegraphics[width=0.45\linewidth]{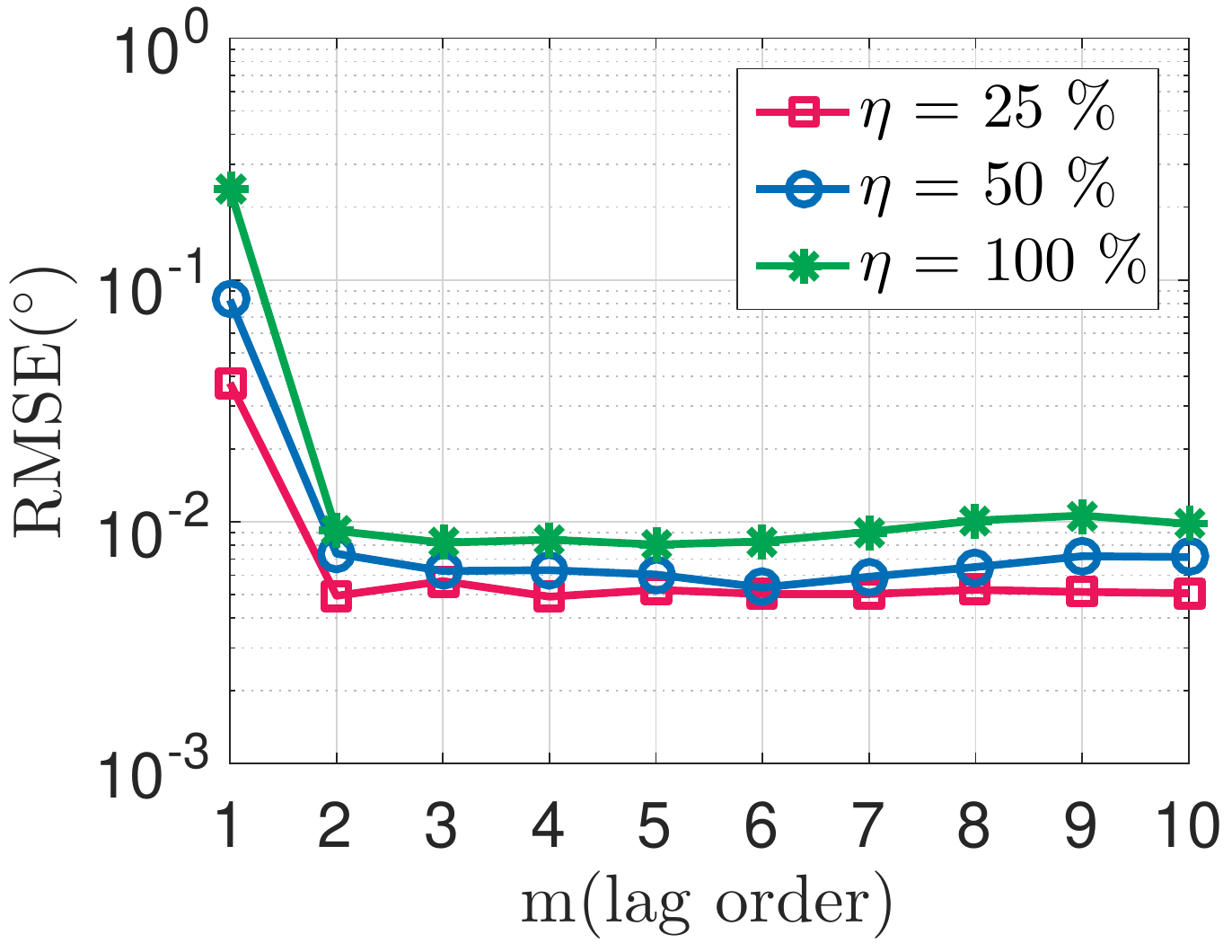}\label{subfig.1-wimo}}
	\subfloat[p-WIMO]{
		\includegraphics[width=0.45\linewidth]{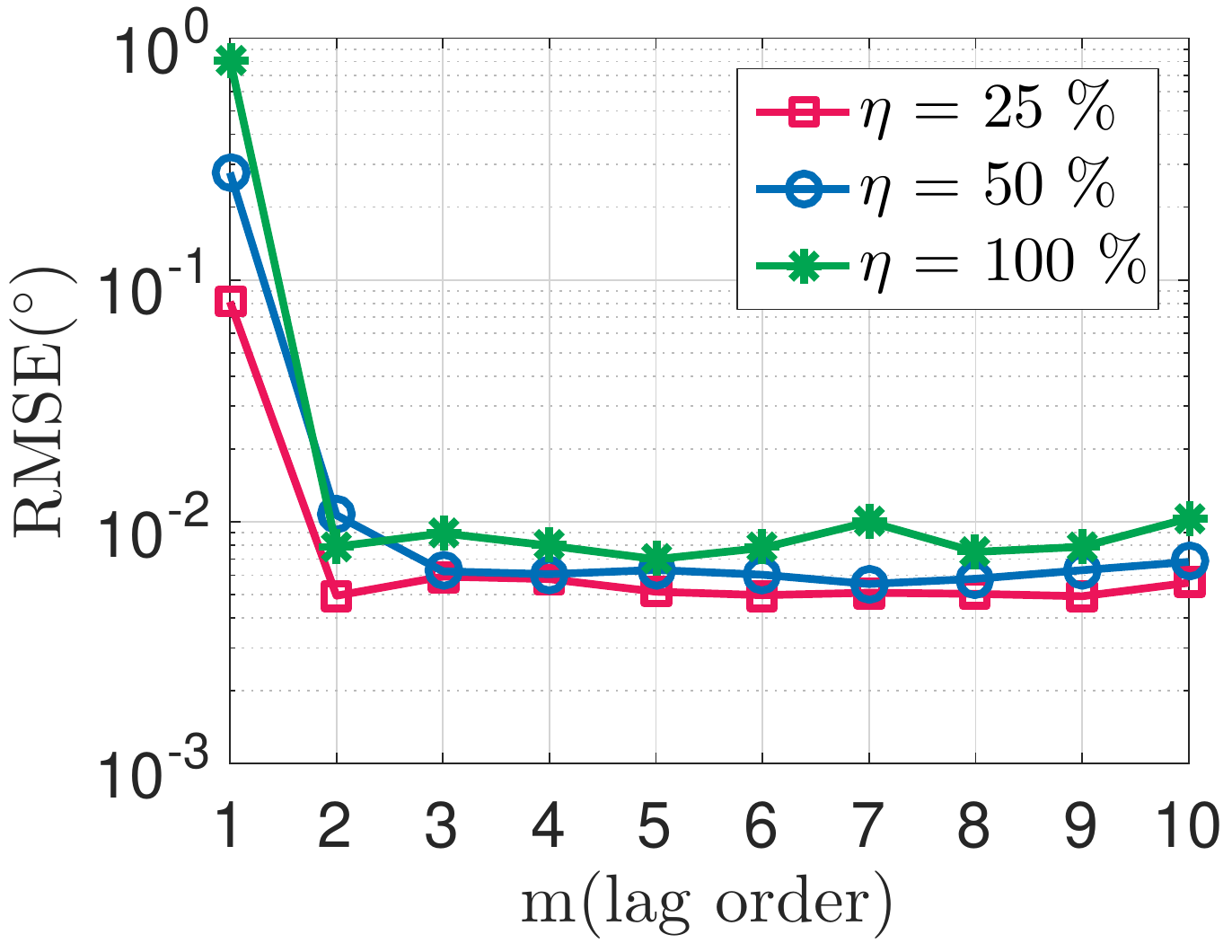}\label{subfig.p-wimo}}
	\caption{RMSE for 1-WIMO and p-WIMO versus $m$ values and different $\eta$. Each point is the average of 100 runs.}
	\label{Fig.MSE_vs_m}
	% source: code3/MSE_vs_m.m
\end{figure}

For the subsequent simulations, the pair $(m,P)$ is set to (6,15) for 1-WIMO and (9,45) for p-WIMO. Furthermore, in the CSSM method, true DOAs are supplied as the initial focusing angles and rotational subspace focusing matrix \cite{Kaveh1988} is applied. 
As the fourth example, we compare the spatial spectrum. Two wideband sources with $f_l$=1.5KHz and $f_h$=4.5KHz and DOAs $10^\circ$,$20^\circ$ with SNR=0dB are considered. The resulting spatial spectrum is illustrated in Fig.~\ref{Fig.DOA_Comparison}. 
%\color{MyBlue}
The sampling frequency $f_s$=10kHz and there are 8192 samples (observation duration is 820 msec).
Some methods show a bias in the peaks locations.
The STEP method requires a critical parameter $M_\theta$ (see \cite{Yin2018b} for details) 
which is estimated through the procedure proposed in \cite{Yin2018b}. In STEP$^*$, $M_\theta$ is set to $\mathcal{K}+1$. 
The $^*$ superscript after the STEP title, emphasizes that this method is excluded from the unknown source's number assumption and $M_\theta$ is set to $\mathcal{K}+1$ for all further simulations.
\color{black}
A magnifier tool is utilized in Fig.~\ref{Fig.DOA_Comparison} which better shows the position of the spectrum's peaks. It is also observed that STEP$^*$ suffers from a bias in the source location estimation. This disadvantage becomes more evident in the subsequent simulations comparing the root mean squared error (RMSE).
\begin{figure}%[!t]
	\centering
	\includegraphics[width=0.64\linewidth]{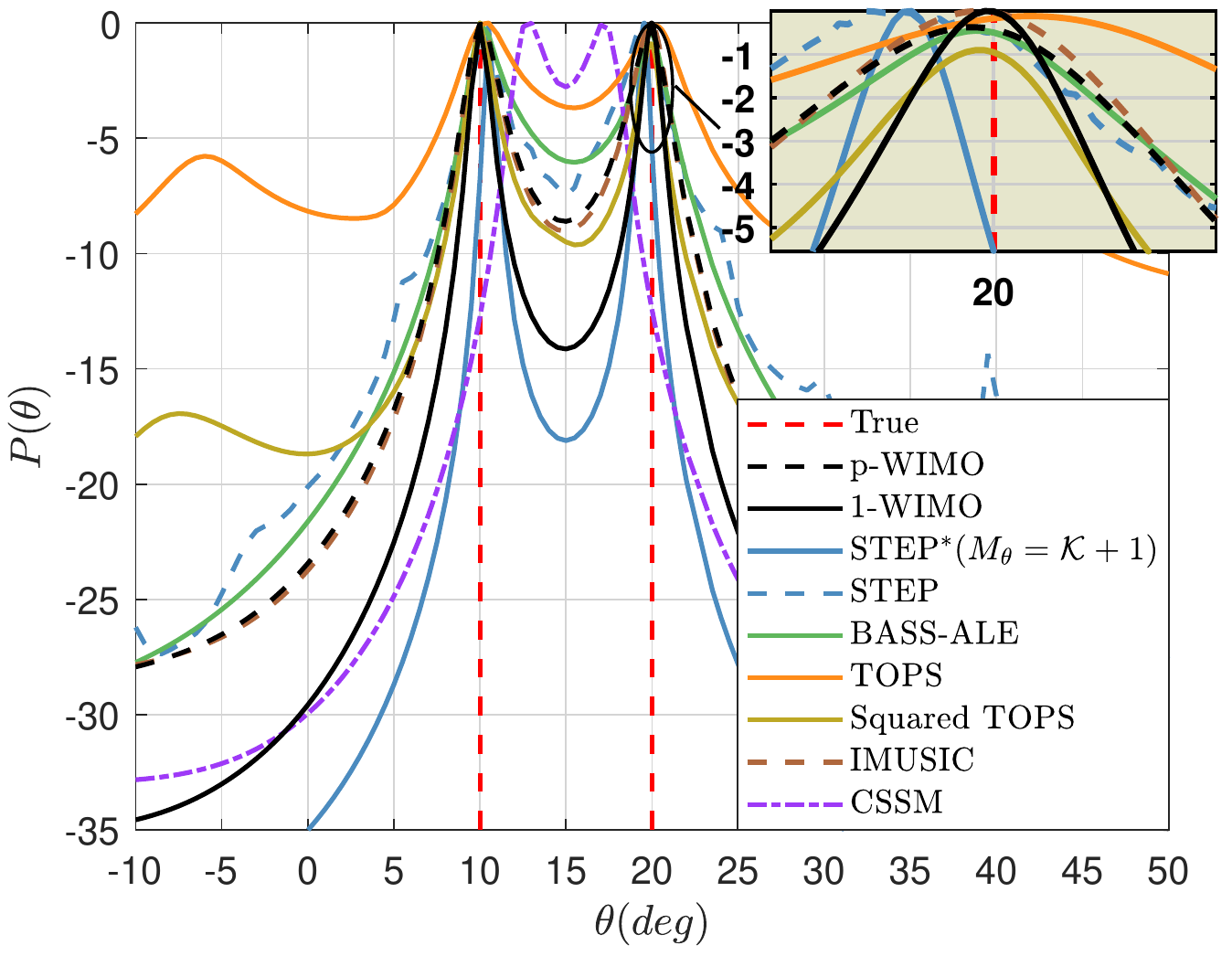}
	\caption{The averaged spatial spectra of different wideband DOA estimators. Each spectrum is averaged over 100 runs.}
	\label{Fig.DOA_Comparison}
	% source: code3/DOA_Spectrum_Comparison.m
\end{figure}

The probability of resolution versus SNR is compared in the next example. Two Gaussian distributed sources located at $15^\circ$ and $25^\circ$ is considered 
%\color{MyBlue}
with number of snapshots the same as the previous example.
\color{black}
 Successful separation is attributed to the peaks with minimum prominence of 3dB and angle error smaller than 1.0$^\circ$.
%\footnote{see MATLAB \cite{SigPro2015} help for findpeaks.m function.}
In addition, spectra with the number of peaks greater or less than the true ones are considered as failed scenarios. The results for three bandwidth ratios of 40\%, 100\%, and 164\% are illustrated in Fig.~\ref{Fig.SeperationProb_vs_SNR}. 
%A comparative comparison among Fig.~\ref{Fig.SeperationProb_vs_SNR} subfigures shows that, in spite of a general decline in performance with bandwidth increase, some methods have a greater functional loss. e.g. BASS-ALE is better than p-WIMO in $\eta$=40\% but shows poorer performance in the two further bandwidth ratios. 
The STEP$^*$ (with the assumption of known sources' number) and 1-WIMO have a superior probability of resolution in low SNR ultra-wideband scenarios.
%, but as mentioned before, STEP suffers from an intrinsic estimation bias even in high SNR regime. 
p-WIMO method is ranked second in terms of probability of resolution for the three simulated bandwidth ratios.

In the sixth example, bandwidth dependent performance is inspected. The bandwidth scale $\gamma$, defined in \eqref{equ.BF}, is used as the broadness index of the input signal. Obviously, $\gamma$=1 is equivalent to the narrowband case. In each $\gamma$, $f_h$ is set to 4KHz and $f_l=f_h/\gamma$. The sources are located at $15^\circ$ and $25^\circ$ with SNR=0dB
%\color{MyBlue}
 and M=8192.
\color{black}
 Fig.~\ref{Fig.Separation_Prob_vs_B} shows the simulation result. 
\begin{figure*}%[!t]
	\centering
	\subfloat[$f_l$=3KHz , $f_h$=4.5KHz ($\gamma$=1.5 , $\eta$=40\%)]{
		\includegraphics[width=0.32\linewidth]{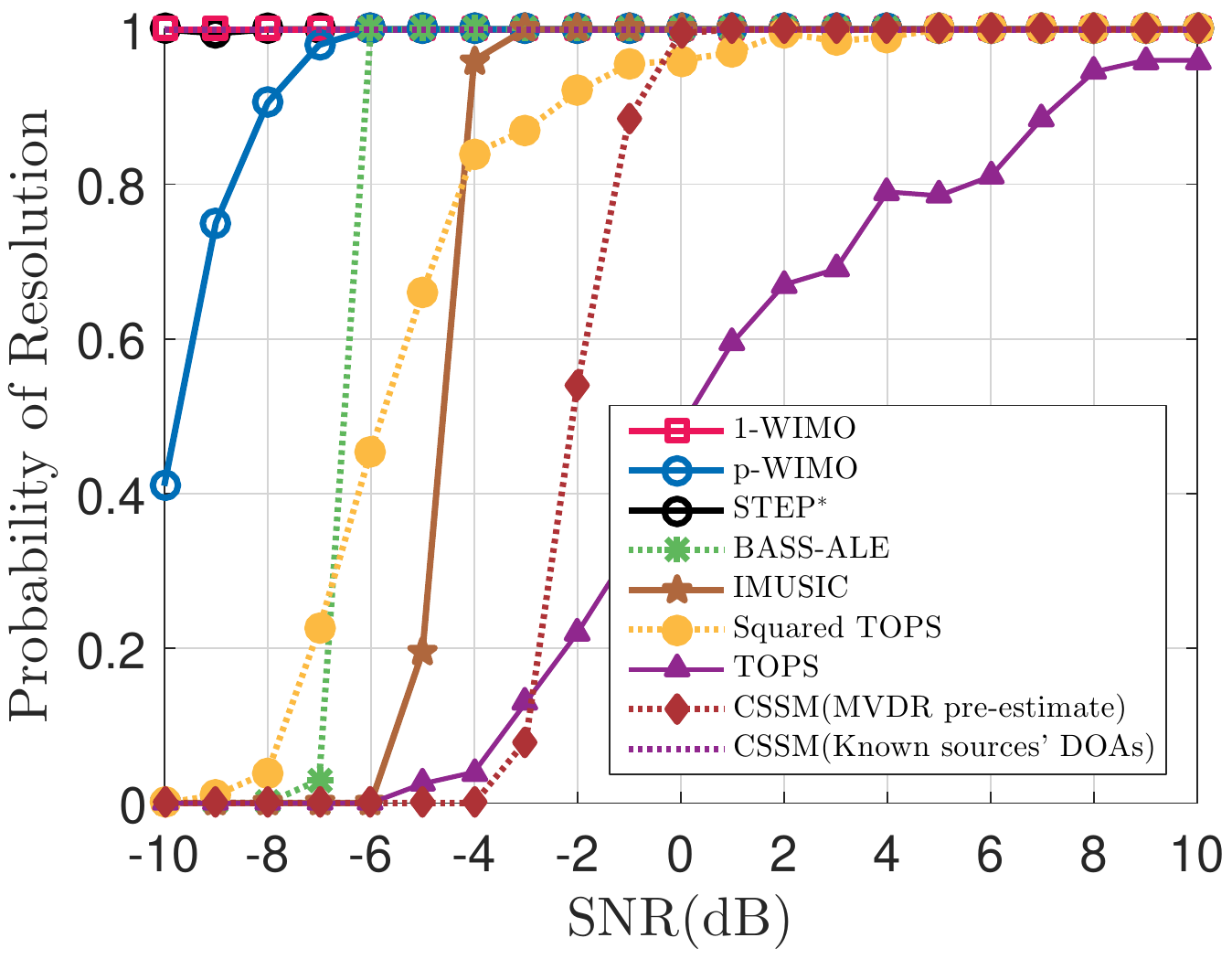}\label{subfig.bwr40}}
	\subfloat[$f_l$=1.5KHz , $f_h$=4.5KHz ($\gamma$=3 , $\eta$=100\%)]{
		\includegraphics[width=0.32\linewidth]{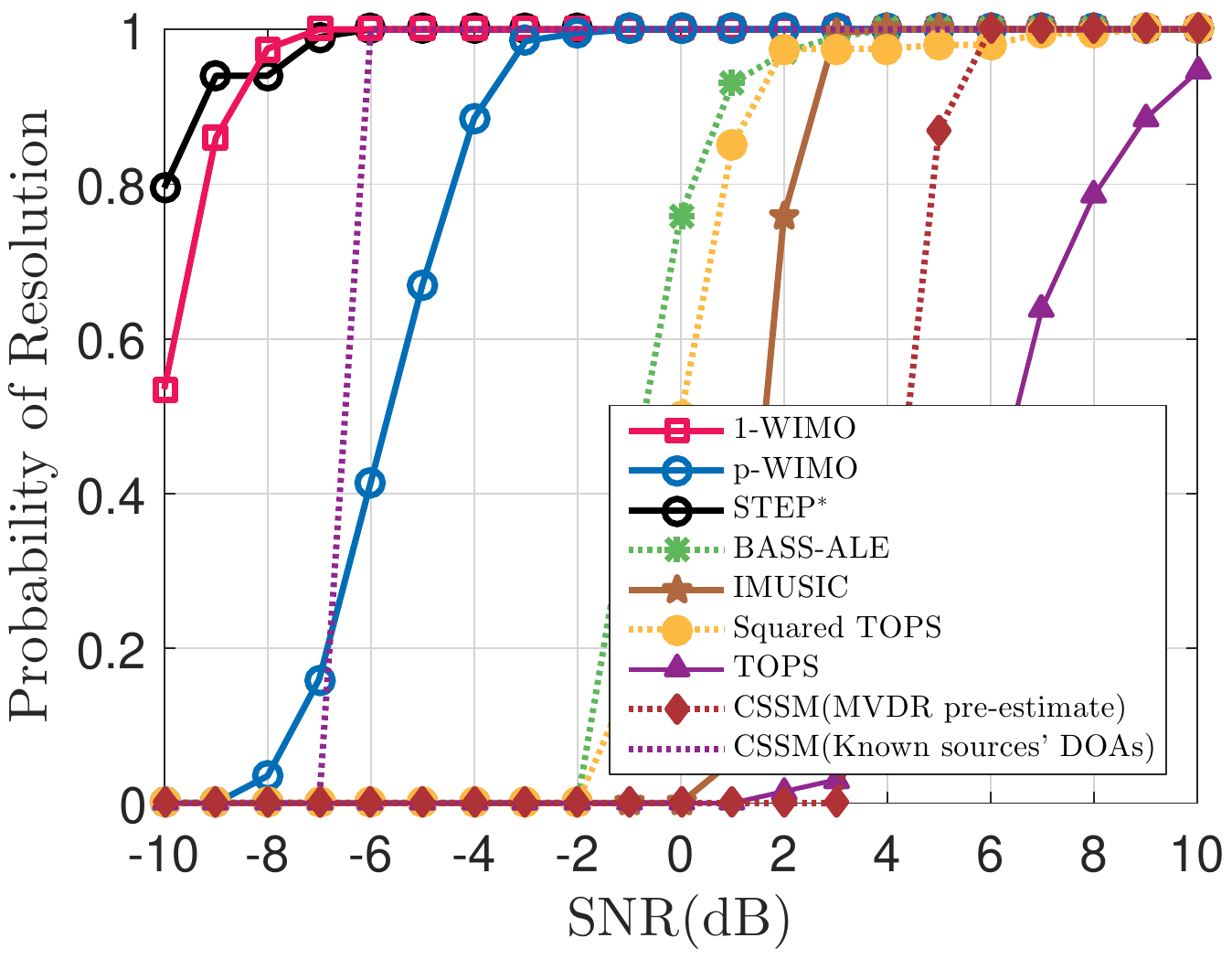}\label{subfig.bwr100}}
	\subfloat[$f_l$=0.4KHz , $f_h$=4.0KHz ($\gamma$=10 , $\eta$=164\%)]{
		\includegraphics[width=0.32\linewidth]{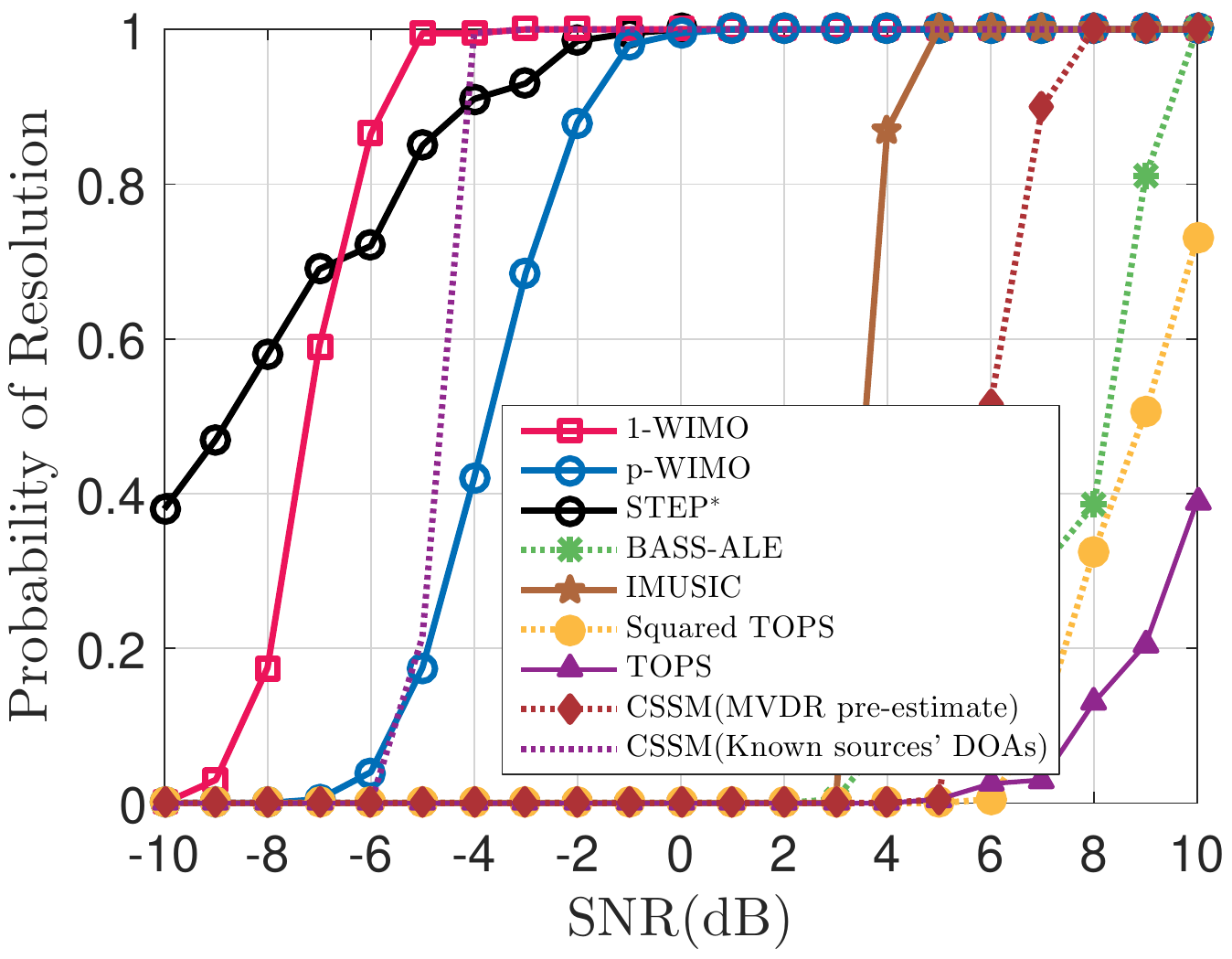}\label{subfig.bwr164}}
	\hfil
	\caption{Probability of resolution versus SNR for three bandwidth ratios 40\%, 100\% and, 164\%. Each point is the average of 200 runs.}
	\label{Fig.SeperationProb_vs_SNR}
	% source: code3/SeperationProb_vs_SNR.m
	% 		  code3/SeperationProb_vs_B_OnlyCSSM.m
\end{figure*}
As expected, bandwidth increase provides a challenging situation for wideband DOA estimators. 
p-WIMO, 1-WIMO and STEP$^*$ show superior performance in resolving the two sources, even when the sources occupy a broad frequency band, i.e. $\gamma$=100. 
\begin{figure}[!b]
	\centering
	\includegraphics[width=0.64\linewidth]{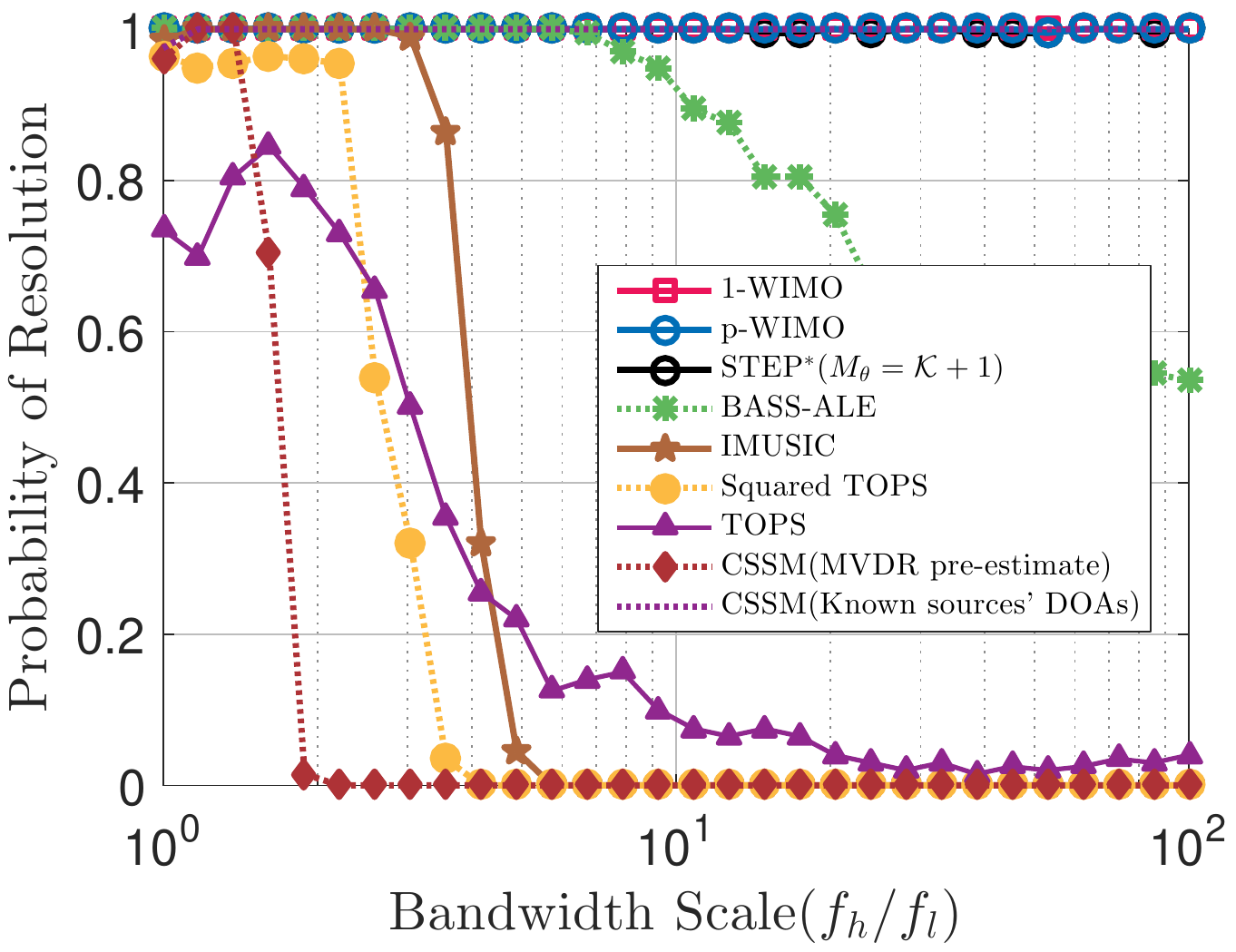}
	\caption{Separation probabilities versus bandwidth scale. Probability values are calculated through 200 runs.}
	\label{Fig.Separation_Prob_vs_B}
	% source: code3/SeperationProb_vs_B.m
\end{figure}
RMSE comparison versus SNR is investigated in the seventh simulation. Two Gaussian distributed sources with $\eta$=100\% ($f_l$=1.5KHz , $f_h$=4.5KHz) at $-5^\circ+\nu$ and $+5^\circ+\nu$ are considered. $\nu$ is chosen randomly with a uniform distribution within $[-0.5^\circ , +0.5^\circ]$. Other simulation circumstances are the same as the previous ones. The results are illustrated in Fig.~\ref{Fig.MSE_vs_SNR}. 
%\color{MyBlue}
The Cramer-Rao lower bound (CRLB) in the wideband case is also calculated by the procedure given in \cite{Wang1985}.
\color{black}
1-WIMO has the best RMSE performance at low SNR regime and also p-WIMO shows identical RMSE at medium and high SNRs. 
%\color{MyBlue}
The CSSM (with known sources' DOAs) stands for CSSM using rotational signal-subspace focusing matrix \cite{Kaveh1988} with exact sources' locations as focusing angles. The CSSM performance using MVDR pre-estimates is also shown for comparison.
\color{black}
As already mentioned, STEP$^*$ (with the given number of sources) despite successful resolving capability, suffers from an SNR independent bias in DOA estimation. 
\begin{figure}[!b]
	\centering
	\includegraphics[width=3in]{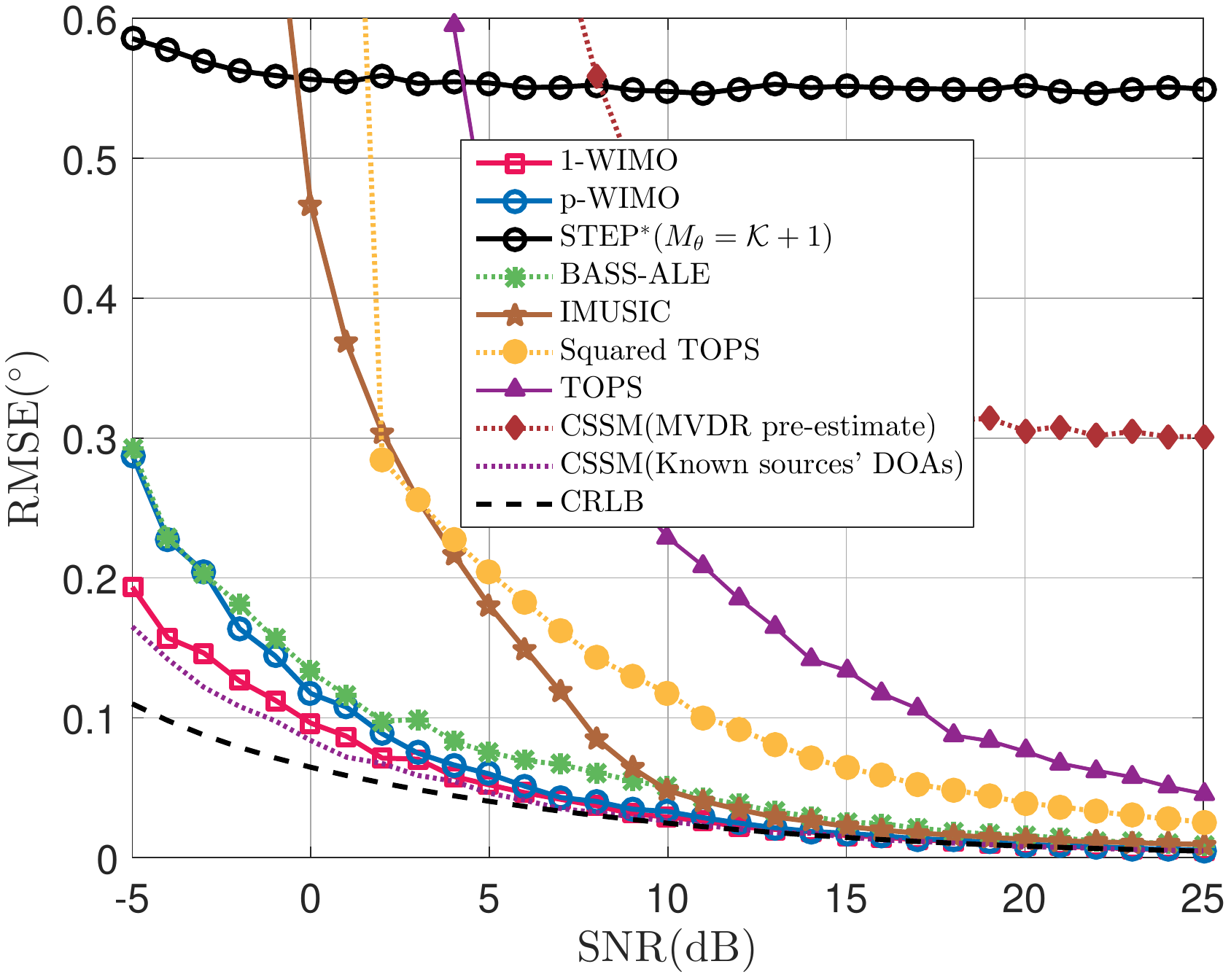}
	\caption{DOA estimation RMSE versus SNR. Two sources with bandwidth ratio $\eta=100\%$ are considered and 500 trials are averaged at each SNR.}
	\label{Fig.MSE_vs_SNR}
	% source: code3/MSE_vs_SNR.m
	% 		  code3/MSE_vs_SNR_OnlyCSSM.m
	% 		  code3/test_CRLB_wideband.m
\end{figure}
In the next example, the probability of resolution of 1-WIMO and p-WIMO are compared with three wideband DOA estimators using sparse representation methods; $\ell_1$-SVD, W-CMSR and, W-LASSO. Simulation conditions are $f_l$=1.5KHz, $f_h$=4.5KHz with sources direction of arrival $15^\circ$ and $25^\circ$. It is observed in Fig.~\ref{Fig.SeperationProb_vs_SNR_Sparse} that 1-WIMO and p-WIMO outperform the other sophisticated methods in term of probability of resolution.
\begin{figure}
	\centering
	\includegraphics[width=0.64\linewidth]{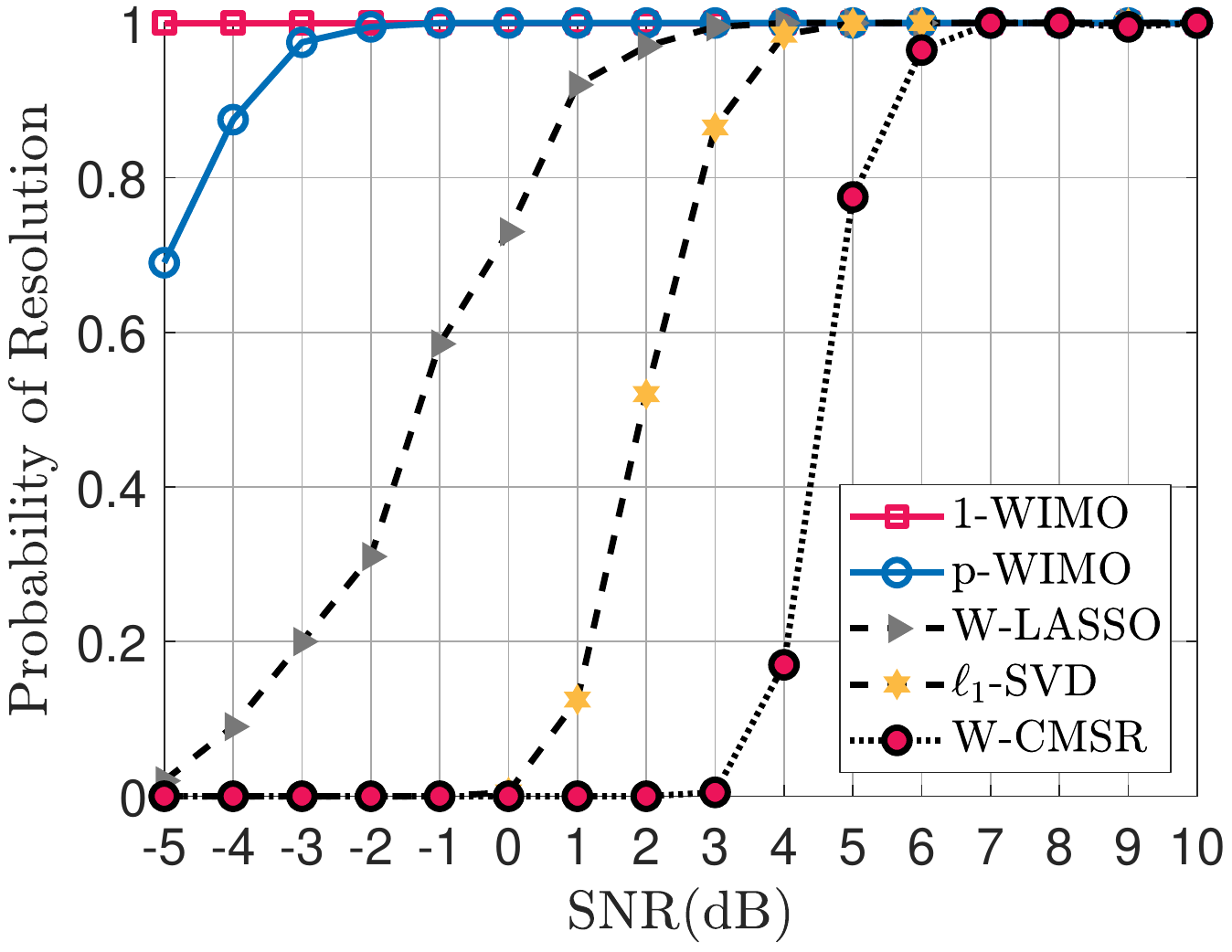}
	\caption{Probability of resolution against SNR for 1-WIMO and the three sparse representation methods $\ell_1$-SVD, W-CMSR and W-LASSO. 200 runs are averaged for each point.}
	\label{Fig.SeperationProb_vs_SNR_Sparse}
	% source: code3/SeperationProb_vs_SNR_Sparse.m
\end{figure}

%%%%%%%%%%%%%%%% Non-uniform PSD simulation
%\color{MyBlue}
WIMO performance in non-uniform power spectral density case is examined in the eighth numerical example. 
%As explained in Section~\ref{sec.non-uniform}, $\breve{\mathbf{S}}(\theta,B,f_c)$ approximation is obtained through \eqref{equ.non-uniform} (rather than \eqref{equ.smat}) and further steps of Algorithm~\ref{alg.WIMO} are applied.
Two non-uniform PSD types Guassian and sinc$^2$ are assumed. The 3dB bandwidth for Gaussian and sinc$^2$ PSD are 1.2KHz and 0.7KHz respectively and the receiver bandwidth is set to 3KHz (1.5KHz$\sim$4.5KHz)\footnote{
	Receiver bandwidth larger than the input signal bandwidth is not usual, but to account for the worse case in term of increasing the signal subspace mismatch loss, receiver bandwidth is assumed larger than the source's 3dB power bandwidth.}. 
To explore the effect of mismatch loss in the case of uniform PSD assumption, WIMO with uniform spectrum formulation \eqref{equ.smat} is also applied with three different $B$ values. 
Simulation circumstances are the same as Fig.~\ref{Fig.SeperationProb_vs_SNR} (fifth example).
The input source's PSD and WIMO probability of separation versus SNR is illustrated in Fig.~\ref{Fig.SeperationProb_vs_SNR_nonuniform}. 
As expected, WIMO with knowing the non-uniform PSD of the signal leads to superior results, on the other hand, applying WIMO with uniform PSD assumption results in a loss due to GSV mismatch. 
Setting separation probability 0.9 as a benchmark, in Gaussian PSD, uniform assumption leads to 0.5dB, 0.8dB and 4.6dB loss for the three values of $B$ parameters. These values are 0.1dB, 0.4dB and, 8.7dB for sinc$^2$ PSD type. 

Consequently, in non-uniform spectrum situation, PSD information can be well exploited in WIMO to achieve the best performance. In addition, WIMO with uniform PSD assumption can be applied with acceptable performance loss, provided that bandwidth of this equivalent uniform PSD ($B$ in \eqref{equ.s_kl_p}) is set about the true PSD's 3dB power bandwidth. 
%\color{MyBlue}
To compare the estimation error in the case of nonuniform PSD, we consider two BPSK sources with center frequency 3kHz and 3dB bandwidth 1.3kHz ($\eta$=44.5\%) located at $\left[-5^\circ,+5^\circ\right]$. The RMSE versus SNR is shown in Fig.~\ref{Fig.MSE_vs_SNR_BPSK}. An inconsistency is seen in some methods which stem from nonuniform sources' PSD and evenly utilization of the frequency bins information.
\begin{figure}%[!b]
	\centering
	\includegraphics[width=0.64\linewidth]{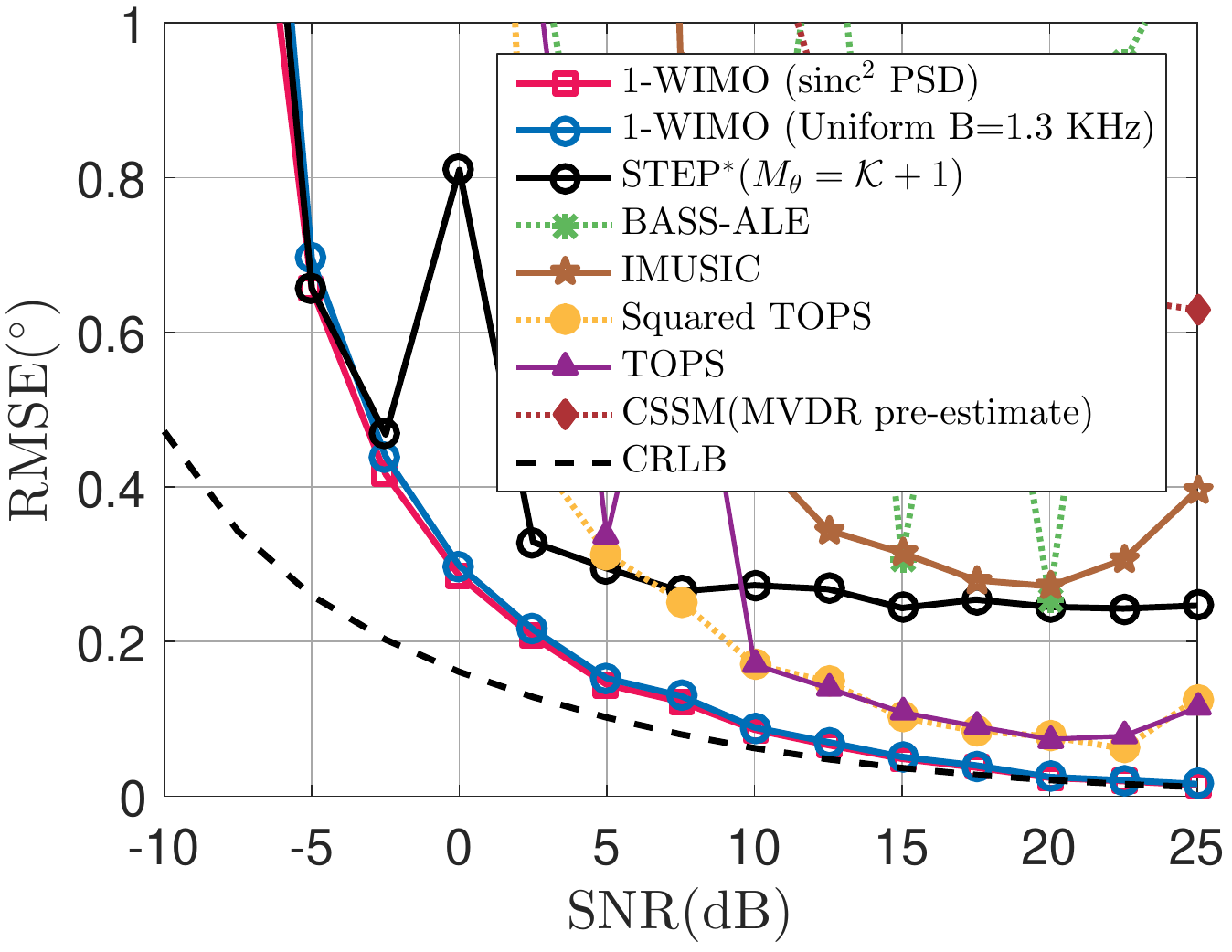}
	\caption{DOA estimation RMSE versus SNR for BPSK with $\eta$=44.5\%.}
	\label{Fig.MSE_vs_SNR_BPSK}
	% source: code3/MSE_vs_SNR_Nonuniform.m
\end{figure}
\color{black}
%%%%%%%%%%%%%%%%
\begin{figure*}%[!t]
	\centering
	\subfloat[]{
		\includegraphics[width=0.25\linewidth]{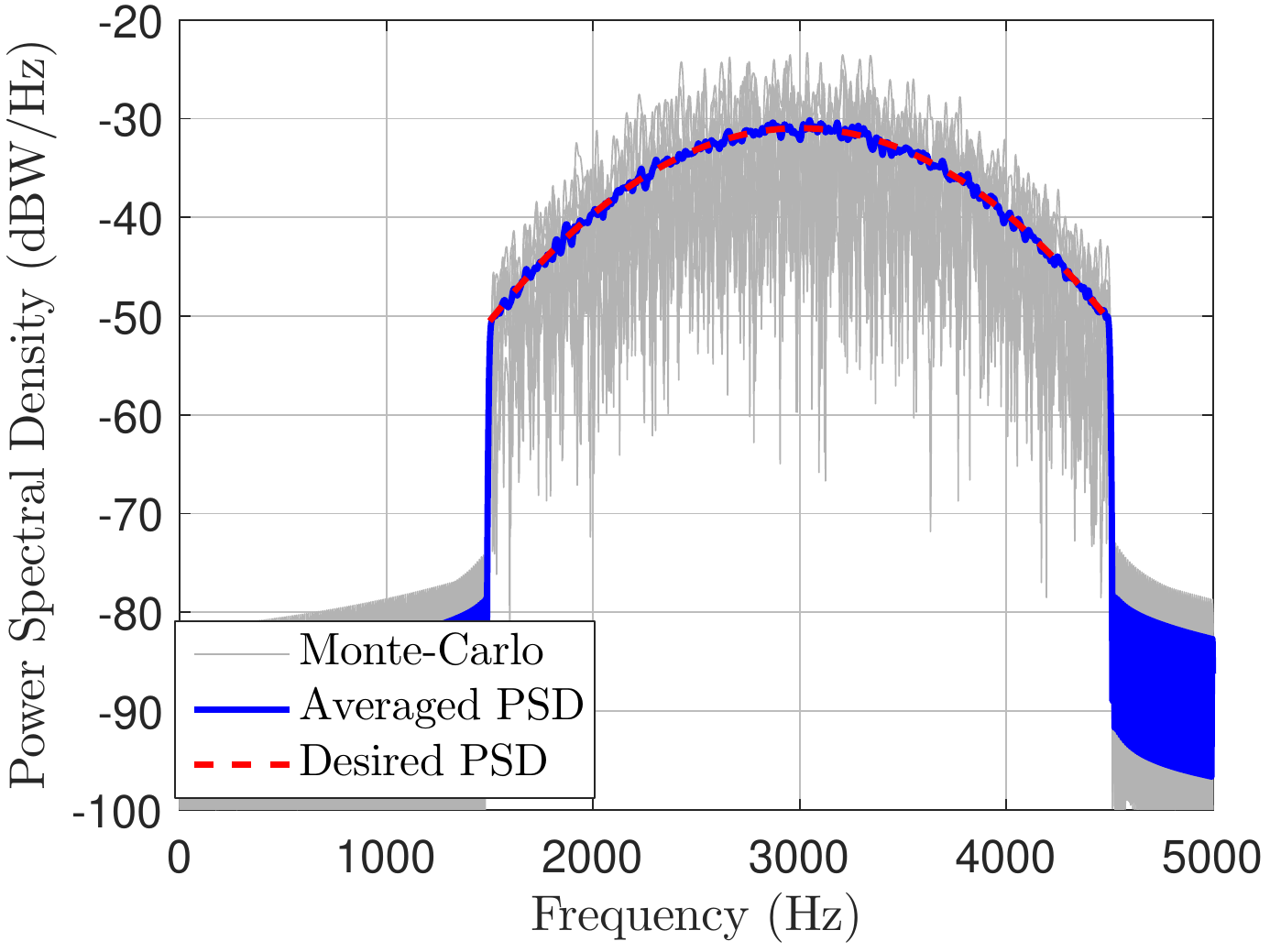}\label{subfig.gauss-psd}}
	\subfloat[]{
		\includegraphics[width=0.25\linewidth]{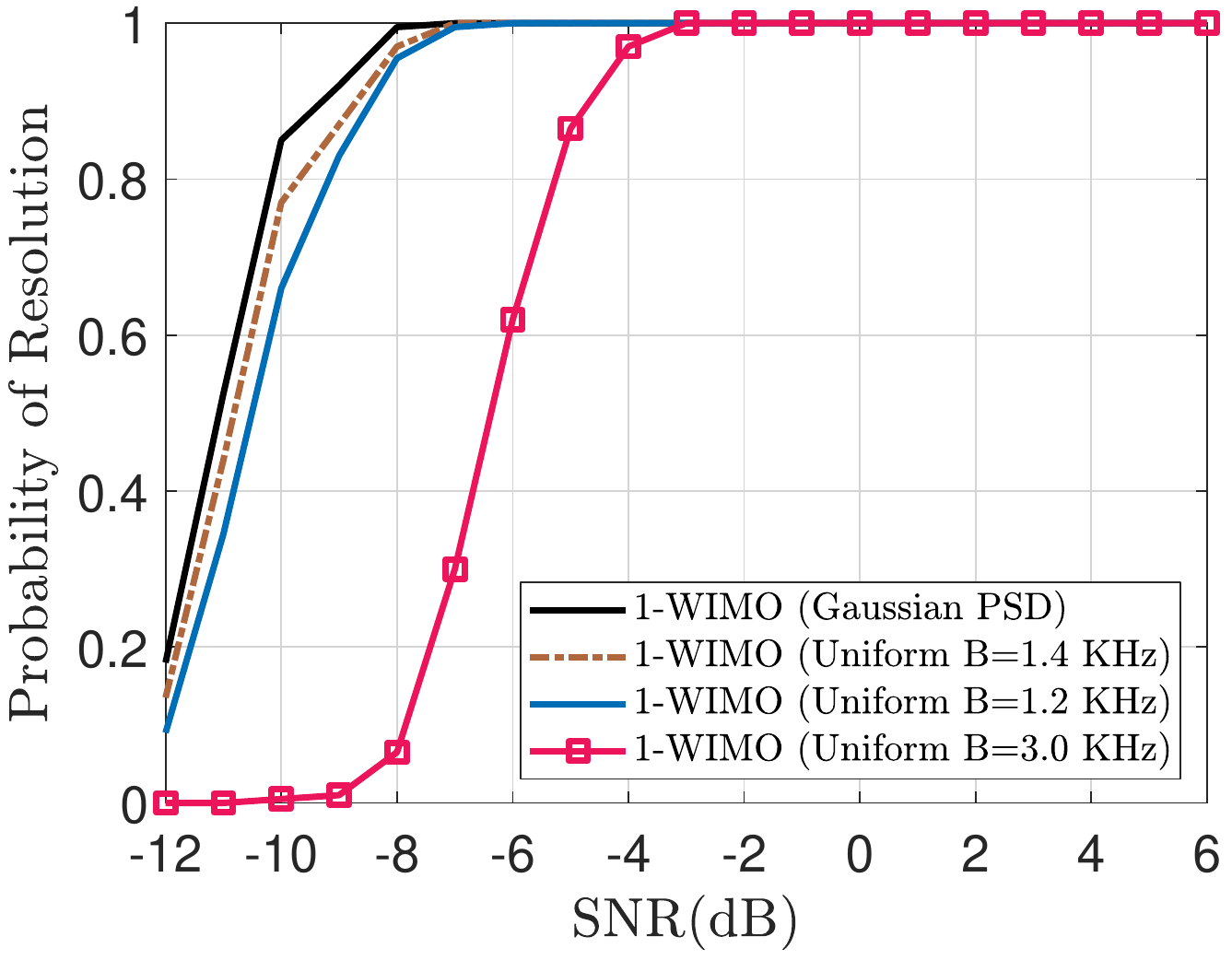}\label{subfig.pr-gauss}}
	\subfloat[]{
		\includegraphics[width=0.25\linewidth]{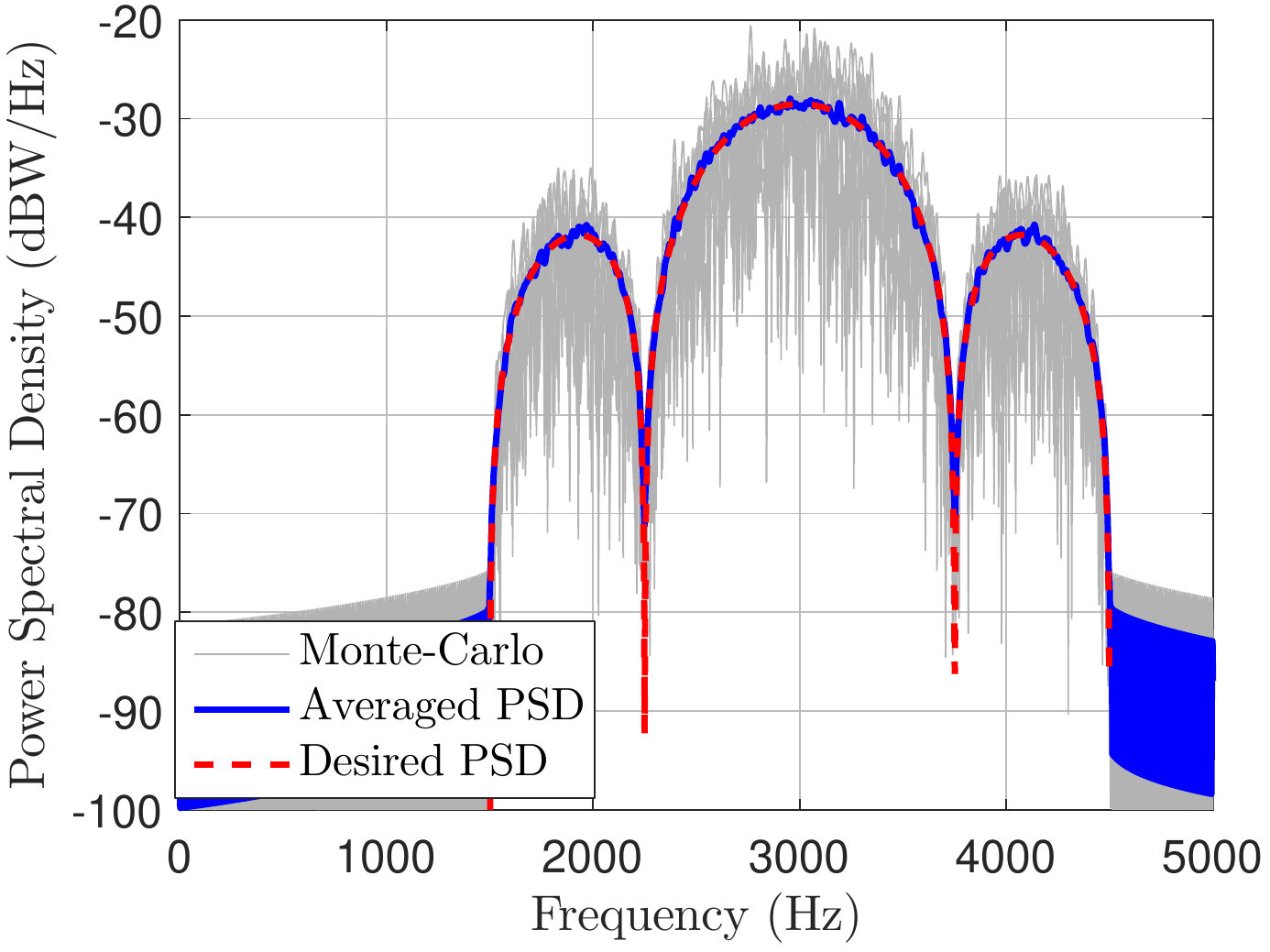}\label{subfig.sinc2-psd}}
	\subfloat[]{
		\includegraphics[width=0.25\linewidth]{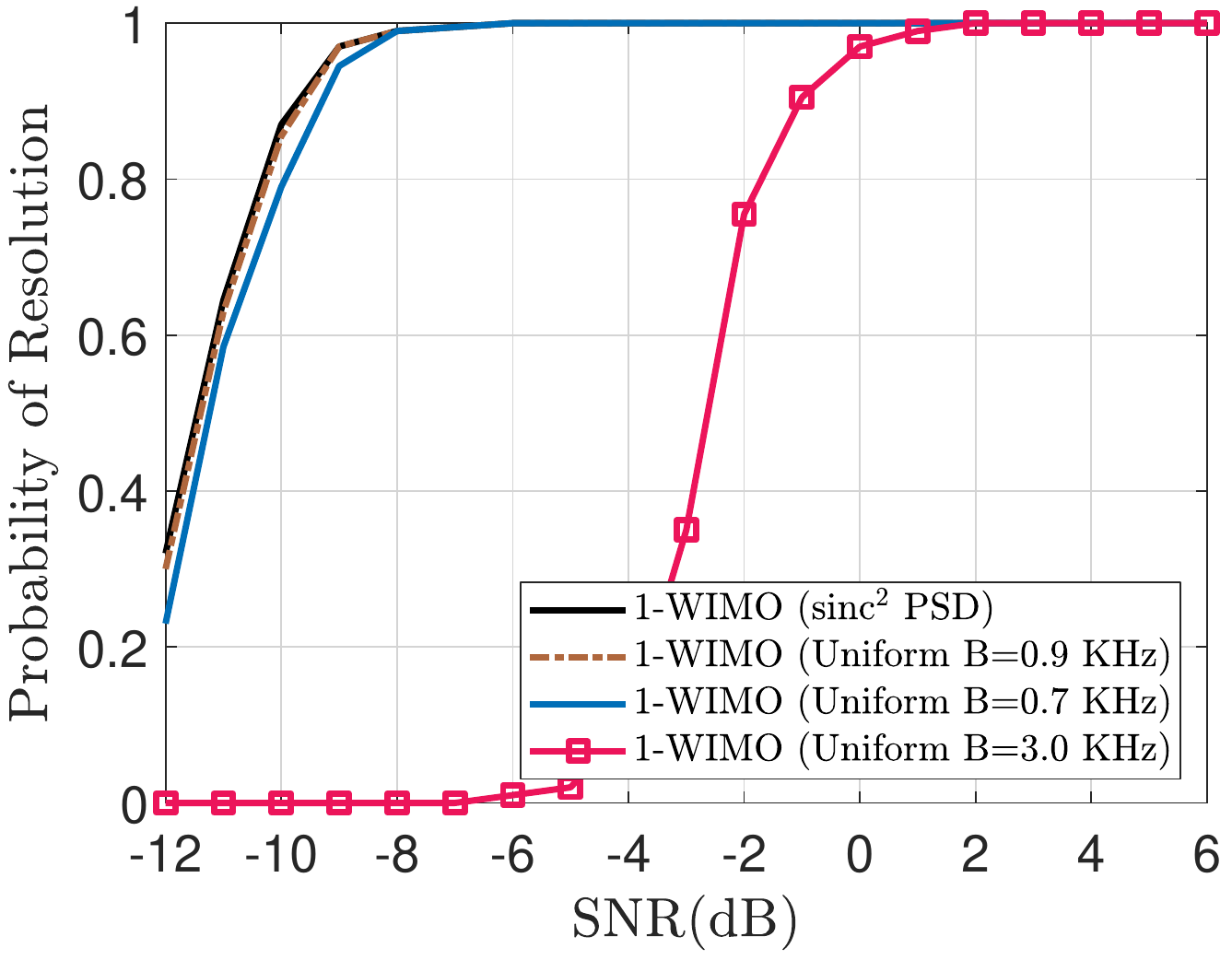}\label{subfig.pr-sinc2}}
	\hfil
	\caption{Probability of resolution versus SNR for two non-uniform PSD. 
	\protect\subref{subfig.gauss-psd} and \protect\subref{subfig.pr-gauss} are the PSD of the simulated signal and the resulting probability of resolution for Gaussian PSD respectively.
	\protect\subref{subfig.sinc2-psd} and \protect\subref{subfig.pr-sinc2} are for the sinc$^2$ PSD.}
	\label{Fig.SeperationProb_vs_SNR_nonuniform}
	% source: code3/SeperationProb_vs_SNR_NonuniformCase2.m
	% 		  code3/Plot_NonUniform_Spectrum.m
\end{figure*}

%\color{MyBlue}
In the ninth example, the performance is examined versus the number of snapshots. In the previous numerical simulations the number of snapshot $M$ , was set to 8192, while in this simulation $M$ is swept from 100 to 10,000 samples. Two Gaussian distributed sources with SNR 5dB and $f\in [1.5kHz\sim 4.5kHz]$ are assumed and 200 runs are averaged for each snapshot. The results are depicted in Fig.~\ref{Fig.SeparationProb_vs_M}. It is seen that the superior resolution capability in low number of snapshots belongs to 1-WIMO. STEP$^*$ and IMUSIC are in the second rank. The number of sources $\mathcal{K}$ is assumed known in the STEP$^*$ and $M_\theta$ is set to $\mathcal{K}+1$.
\begin{figure}%[!b]
	\centering
	\includegraphics[width=0.64\linewidth]{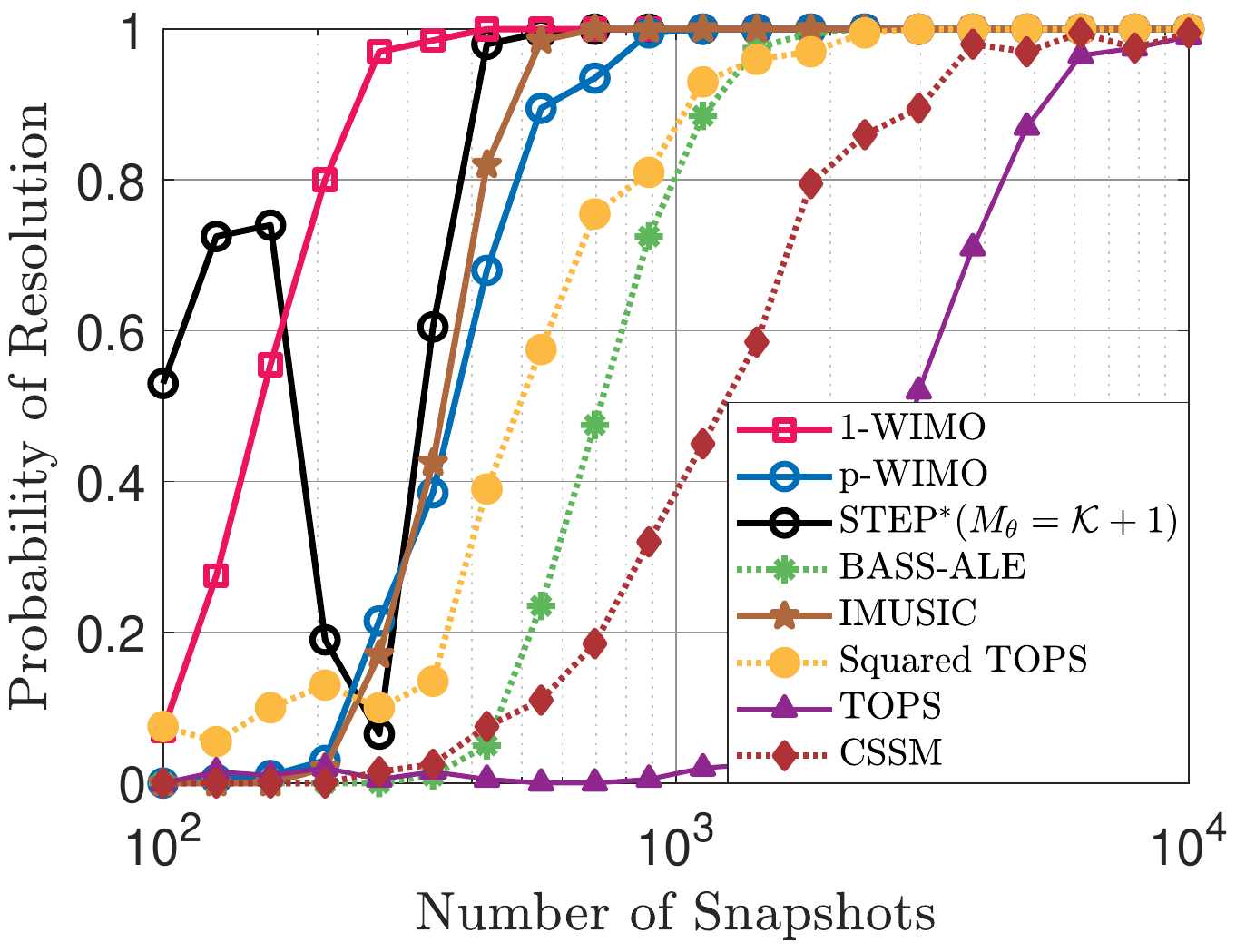}
	\caption{The probability of separation versus the number of snapshots.}
	\label{Fig.SeparationProb_vs_M}
	% source: code3/SeperationProb_vs_M.m
\end{figure}
\color{black}

%Most of the existing wideband estimators are computationally complicated and therefore, are intractable in practical applications. 

%\color{MyBlue}

In the tenth example, we compare the DOA estimation RMSE of 1-WIMO with the two spatial-only approaches introduced in \cite{Agrawal2000}. The results are depicted in Fig.~\ref{Fig.RMSE_vs_SNR_SPO}. The simulation parameters are given in the figure's caption. $J_1(\theta)$, $\tilde{J}_1(\theta)$ and $J_2(\theta)$ are based on equations (18), (20) and (25) of \cite{Agrawal2000} respectively. Avoiding the infeasible exhaustive search through all source's number for $J_1(\theta)$ and $\tilde{J}_1(\theta)$, the true number of sources are supplied in these methods. For $J_2(\theta)$ eigenvectors corresponding to noise subspace is manually set to the eigenvectors corresponding to the three smallest eigenvalues of SCM. It is seen that with given sources' number, $\tilde{J}_1(\theta)$ has better performance in the low SNR regime but 1-WIMO surpasses it in medium and high SNR values. It is noteworthy that for $\theta\in\left[-90^\circ\sim +90^\circ\right]$ with grid size $1^\circ$, 1-WIMO is 8 times slower than $J_2(\theta)$ but it is about 120 times and 900 times faster than $J_1(\theta)$ and $\tilde{J}_1(\theta)$ respectively
\footnote{
	%\color{MyBlue}
	The computational complexity of $J_1(\theta)$ and $\tilde{J}_1(\theta)$ are $\mathcal{O}(N_\theta^\mathcal{K})$, i.e., increase exponentially with the number of sources $\mathcal{K}$.}
\begin{figure}%[!b]
	\centering
	\includegraphics[width=0.95\linewidth]{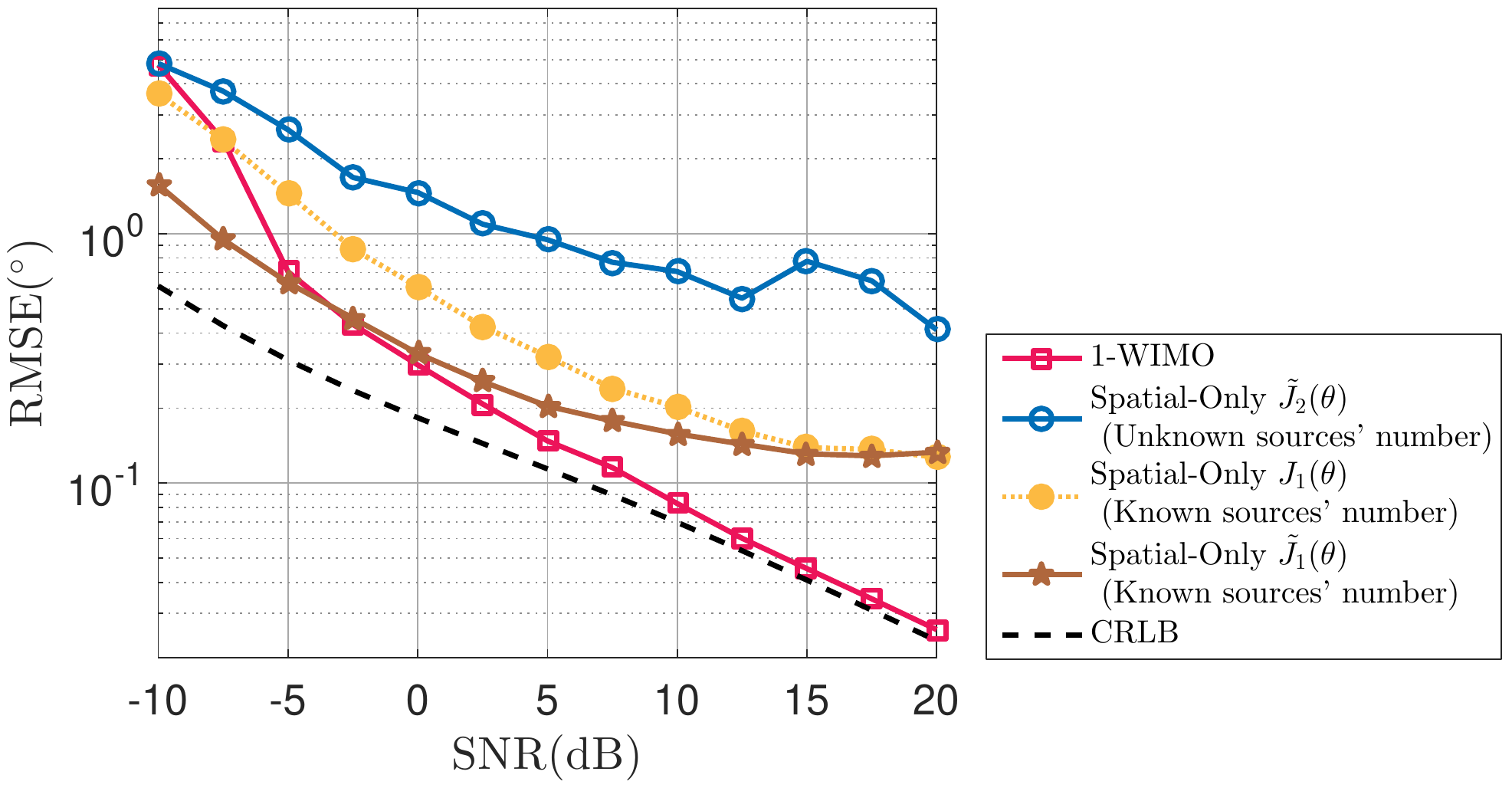}
	\caption{DOA estimation RMSE versus SNR for 1-WIMO and spatial-only methods \cite{Agrawal2000}. 
	ULA with $N_S$=8, M=1024, $f_l$=1.5KHz and $f_h$=4.5KHz with uniform PSD, $f_s$=10KHz and sources' DOAs are $\left[-5^\circ,+5^\circ\right]+\nu$ and 500 trials are averaged for each SNR.}
	\label{Fig.RMSE_vs_SNR_SPO}
	% source: code3/MSE_vs_SNR_CompareWithPrasad.m
\end{figure}
%\color{black}

\color{MyBlue}
In the next example, we compare the resolution capability of the methods versus sources' DOA separation. Two Gaussian sources with $\eta$=60\%, SNR=0dB and 1024 snapshots are assumed. Sources are located at $\left[20-\frac{\Delta\theta}{2} , 20+\frac{\Delta\theta}{2} \right]$ and successful separation is attributed to a spectrum with minimum prominence 3dB and maximum error $1^\circ$. The result is illustrated in Fig.~\ref{Fig.SeperationProb_vs_Theta}, which shows STEP (with known sources' number) and 1-WIMO have superior resolution capability in low SNR and adjacent sources.
\begin{figure}%[!b]
	\centering
	\includegraphics[width=0.64\linewidth]{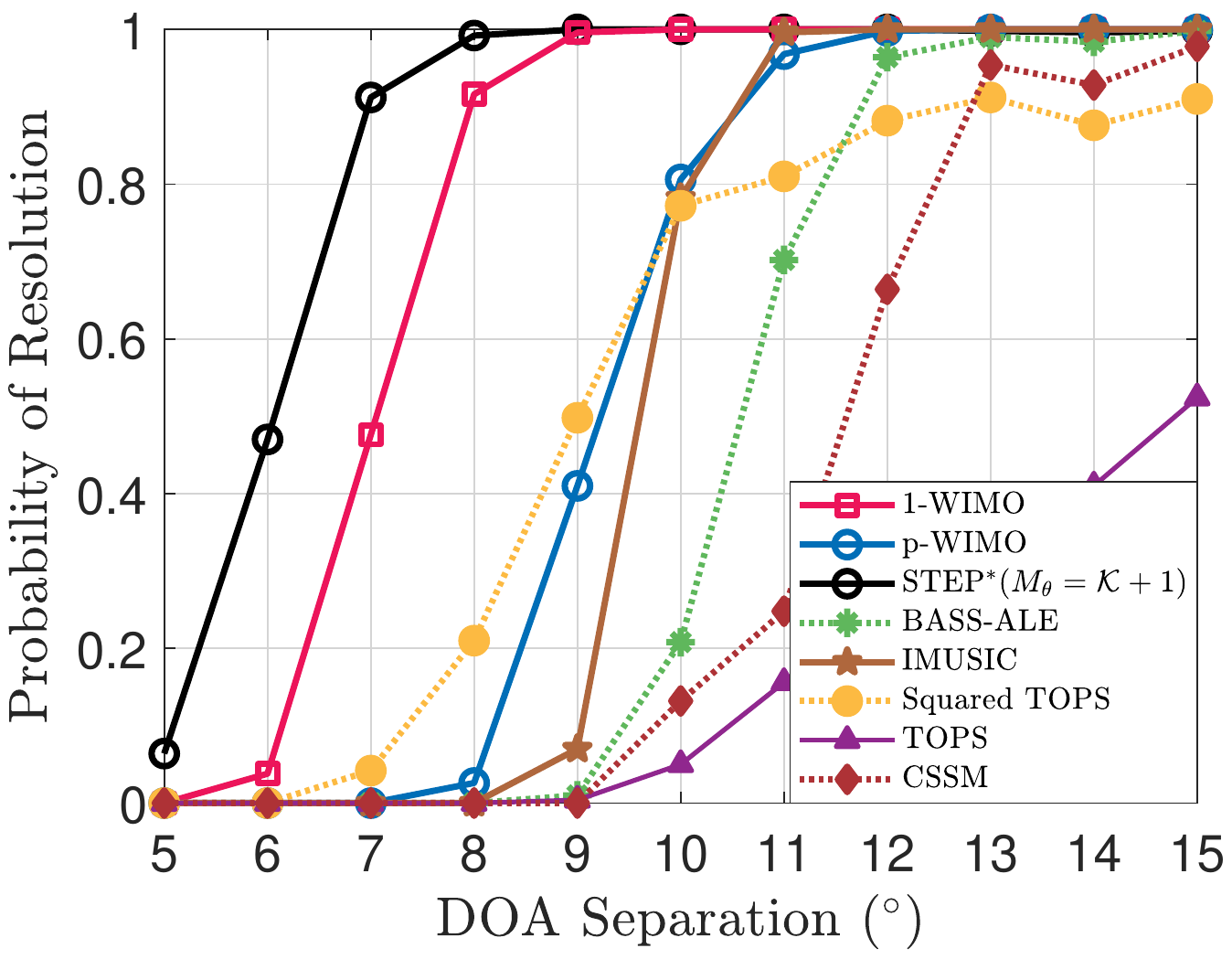}
	\caption{Probability of resolution versus sources' DOA separation. 500 trials are run for each SNR.}
	\label{Fig.SeperationProb_vs_Theta}
	% source: code3/SeperationProb_vs_Theta.m
\end{figure}
\color{black}

\color{MyBlue}
Resolution sensitivity to the sources' coherence is studied in the next numerical example. Two wideband sources with $\eta$=60\%, SNR=5dB are located at 15$^\circ$ and 25$^\circ$. There are 1024 samples with $f_s$=10kHz. 500 trials are run at each SNR and success condition is the same as the previous example. The resolution probability versus coherence index $\rho$ is shown in Fig.~\ref{Fig.SepProb_vs_Coho}. Obviously $\rho$=0 corresponds to sources' independence and $\rho$=1 denotes fully correlated sources. 1-WIMO shows the best performance against sources' correlation and Squ-TOPS, p-WIMO and IMUSIC are in the second rank.
\begin{figure}%[!b]
	\centering
	\includegraphics[width=0.64\linewidth]{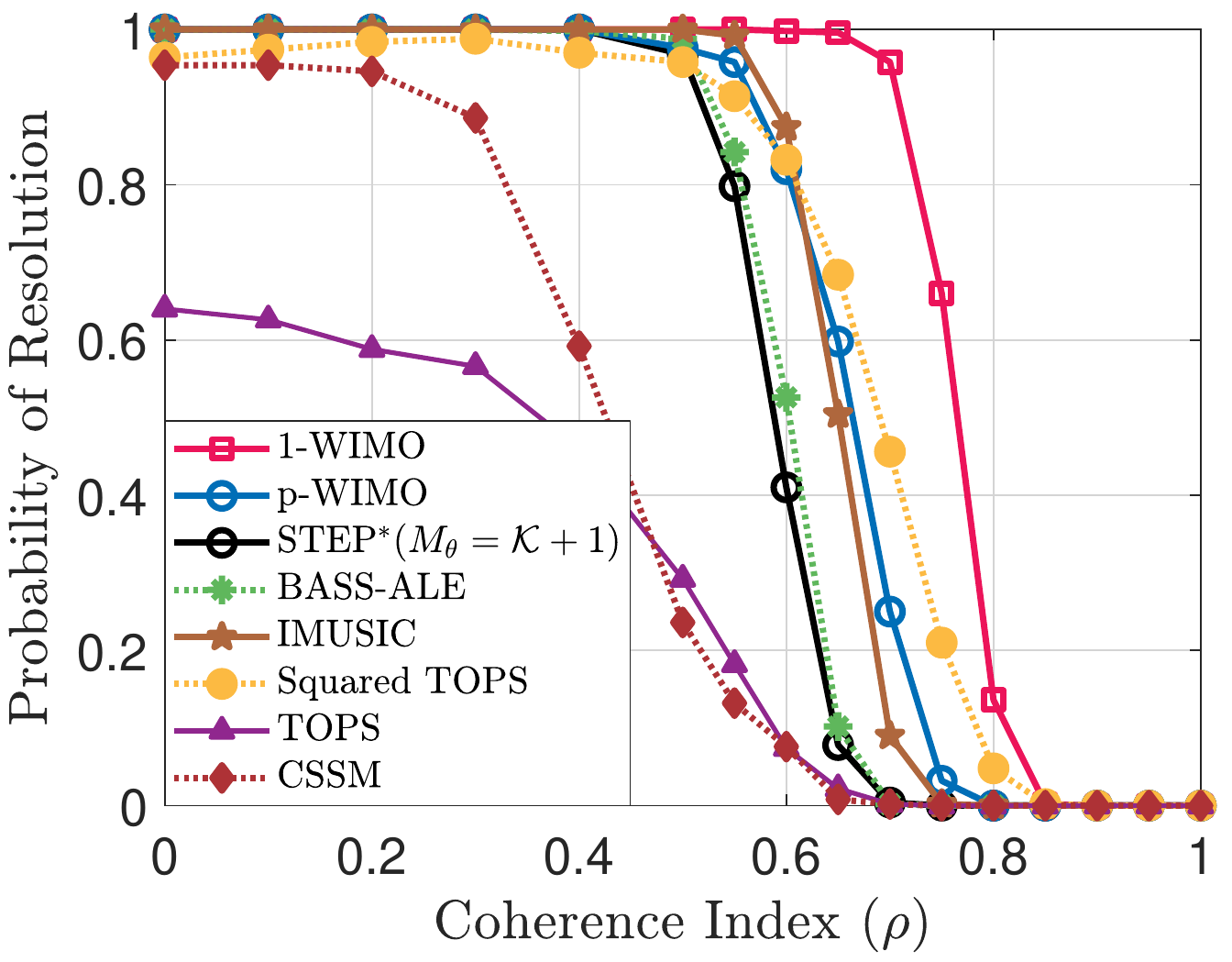}
	\caption{Probability of resolution versus sources' coherence index.}
	\label{Fig.SepProb_vs_Coho}
	% source: code3/SeperationProb_vs_Coherence.m
\end{figure}
\color{black}

In the last numerical example, we compare methods runtime versus bandwidth ratio. $\eta$ is swept from 0\% to 150\%, $f_h$=4KHz
%\color{MyBlue}
and M=8192.
\color{black}
 Runtime is reported as the averaged calculation time of spatial spectrum for $-90^\circ$ to $+90^\circ$ interval with grid size $1^\circ$. $\Delta f$ for the methods involving subband processing is set to $50$Hz; namely CSSM, IMUSIC, TOPS, squared TOPS, STEP$^*$ and $\ell_1$-SVD. The results are illustrated in Fig.~\ref{Fig.RunTime_vs_B_with_ACOV}.
\begin{figure}[!t]
	\centering
	\includegraphics[width=3in]{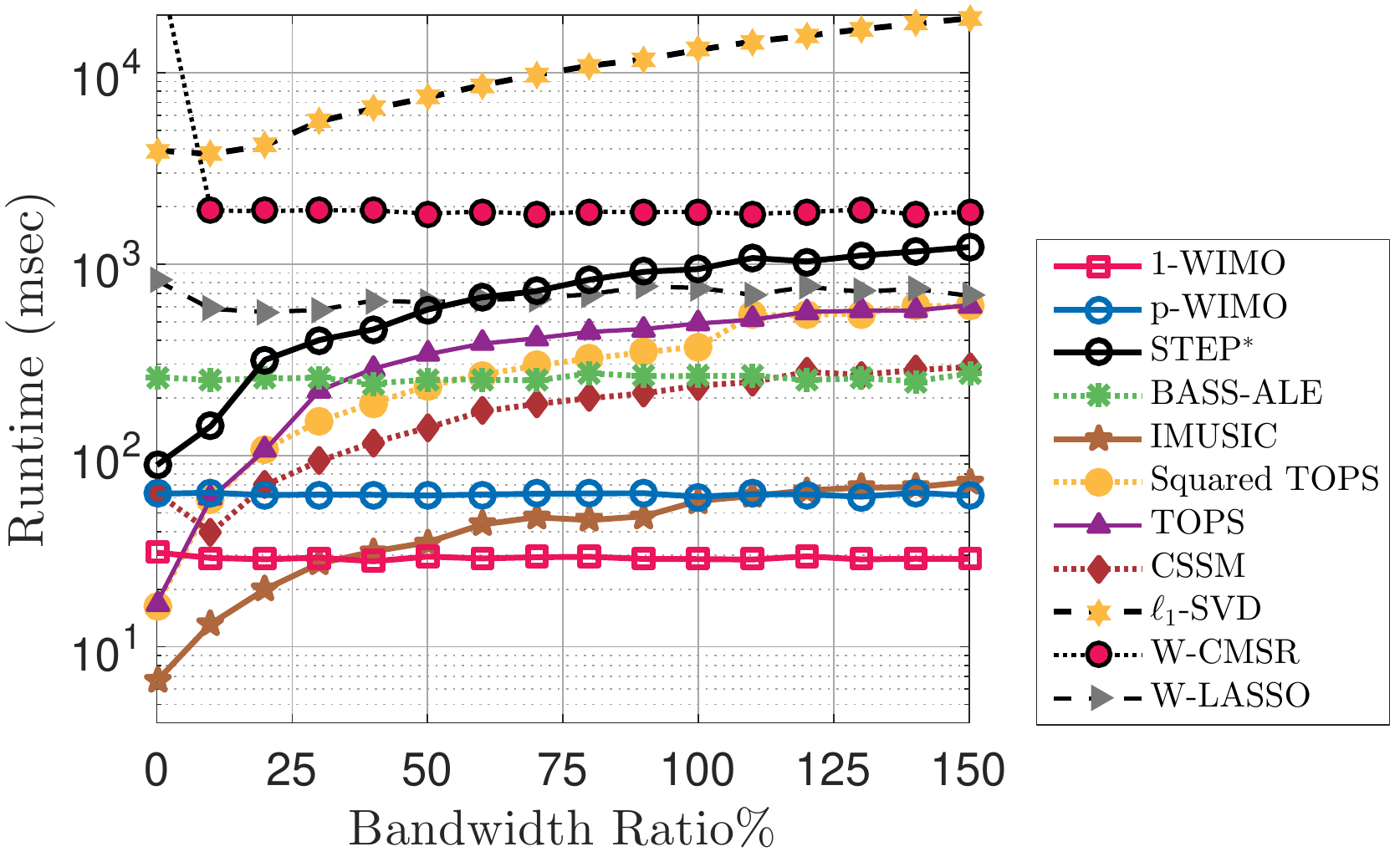}
	\caption{Comparison of the averaged runtime versus bandwidth ratio.}
	\label{Fig.RunTime_vs_B_with_ACOV}
	% source: code3/RunTime_Analysis_vs_B.m
\end{figure}
%It should be noted that, CSSM have the true sources DOAs as the input of focusing process, while CSSM would take longer time than BASS-ALE with considering the initial DOA estimate time. 
$\ell_1$-SVD and W-LASSO utilizes CVX toolbox and W-CMSR uses SeDuMi \cite{sturm1999} as the solver. This simulation is run on a PC with the following specifications: Windows 10 (64bit version), Intel Core i7-920 3.4 GHz and 8GB RAM. %The $m$ parameter for 1-WIMO, p-WIMO and, BASS-ALE is set to 3. 
%Since these three methods rely on STCM eigen-analysis, they have a near computational load. 

Numerical results, show that the main part of calculation of WIMO belongs to estimation and decomposition of $\hat{\mathbf{S}}_{\tilde{\mathbf{y}}\tilde{\mathbf{y}}}$, which increases with $\mathcal{O}(m^3)$. Therefore, regarding the previously discussed constraint on $\varepsilon$, $m$ can be selected as small as possible if a faster calculation is required. 
It is also shown that in identical conditions, BASS-ALE has more computational cost due to the spatial averaging process over multiple subbands.
%Nevertheless, this example shows the proposed WIMO framework is one of the fastest and computationally efficient wideband DOA estimators. 
%Hence, WIMO shows a bandwidth independent computational cost with no subband processing.
%	\includegraphics[width=3.0in]{pic/Separation_Prob_vs_Theta_0db}
%	\caption{Separation probabilities versus sources' angle separation.}
%	\label{Fig.Separation_Prob_vs_Theta}
%	% source: code3/SeperationProb_vs_Theta.m
%\end{figure}
%\begin{table*}[!t]
%	%\renewcommand{\arraystretch}{1.3}
%	\centering
%	\caption{Comparison of run time and implementation drawbacks among different wideband DOA estimation methods.}
%	\label{table.Wideband_DOA_Comparison}
%	%\input{table/Wideband_DOA_Comparison}
%	\input{table/Wideband_DOA_Comparison_new}
%\end{table*}

%\begin{table*}[!t]
%	\centering
%	\begin{threeparttable}[b]
%	\caption{Comparison of run time and implementation challenges among different wideband DOA estimation methods.}
%	% (Corei7 920, 2.66 GHz, 8-GB RAM) 
%	\label{table.Wideband_DOA_Comparison}
%	%\input{table/Wideband_DOA_Comparison}
%	\input{table/Wideband_DOA_Comparison_new}
%	\end{threeparttable}
%\end{table*}
\section{Conclusion}\label{sec.conclusion}
A new broadband DOA estimation approach, named wideband modal orthogonality (WIMO) was introduced in this paper. WIMO proposes a subspace-based solution, wherein the noise subspace is obtained from eigen-analysis of STCM and the signal subspace is provided from a mathematical approximation of STCM.
Our proposed DOA estimators, namely 1-WIMO and p-WIMO, overcome 
\color{MyBlue} some of \color{black}
the implementation challenges of the previous methods, such as spectral decomposition, focusing procedure and, multiple EVD for all subbands and test angles. They also require no a priori information on the number of sources and the pre-estimates of their DOAs.
Extensive numerical investigations demonstrated the superior performance of WIMO with more than one order of magnitude faster runtime compared to some state-of-the-art schemes.
%demonstrate the excellent performance of WIMO with much low computational complexity to compare the performance of our proposed methods with other state-of-the-art schemes.
%Multiple numerical examples were drawn in which the proposed methods compared with the others in different aspects. 
%In addition to appropriate performance of the proposed algorithms, their low computational complexity compared to other approaches is noteworthy.
%the significant advantage of light computational load over the well-known competitors is noteworthy. 
We also showed that the presented STCM approximation can accurately estimate broadband signal subspace dimension.
%\color{MyBlue}
In the case of non-uniform power spectral density, it is shown that knowledge about source's spectral distribution can be well exploited to achieve the optimum WIMO performance. On the other hand, uniform assumption in the signal bandwidth can provide satisfactory results.
%\color{black}
%\vspace*{-25pt}
\appendix
\section{Proof of theorem~\ref{the.1}} \label{app.0}
\begin{proof}
	 Using the \textit{Rayleigh quotient theorem} \cite{Horn2013}, the $i$-th eigenvector can be formulated as the solution of the following optimization,
	\begin{align}
	&\breve{\mathbf{u}}_i = \underset{\mathbf{u}}{\text{argmax}} \; \sum_{k,l}u_k^*\breve{s}_{k,l}u_l \label{equ.Rayleigh2}\\
	& \text{s.t.} \; \|\mathbf{u}\|_2 = 1 \; , \; \mathbf{u}^H \breve{\mathbf{u}}_j=0 \; , \; 
	\lbrace i,j \in \lbrace 1,\cdots,L \rbrace \; , \; j<i \rbrace \nonumber
	\end{align}
	From \eqref{equ.sn_sw} $\breve{\mathbf{S}}^N = \mathbf{g}\mathbf{g}^H$, then $\text{rank}(\breve{\mathbf{S}}^N) = 1$ and $\breve{\mathbf{u}}_1^N = \mathbf{g}/\sqrt{L}$. substituting $\breve{s}^W_{k,l}$ and $\breve{s}_{k,l}$ by the following integrals,
	\begin{align}
	\breve{s}^W_{k,l} &= \frac{1}{B} \int_{-\frac{B}{2}}^{+\frac{B}{2}} e^{j2\pi (h_k - h_l)u}\cdot du
	\\
	\breve{s}_{k,l} &= \frac{1}{B} \int_{-\frac{B}{2}}^{+\frac{B}{2}} e^{j2\pi (h_k - h_l)(u + f_c)}\cdot du
	\end{align}
	and then rewriting the objective in \eqref{equ.Rayleigh2}, we obtain,
	\begin{equation} \label{equ.sigma_i}
	\begin{aligned}
	\sum_{k,l}u_k^*\breve{s}_{k,l}u_l = \frac{1}{B}\int_{-\frac{B}{2}}^{+\frac{B}{2}} \left| \sum_{k} u_k e^{-j2\pi h_k(u+f_c)} \right|^2 \cdot du\\
	= \frac{1}{B}\int_{-\frac{B}{2}}^{+\frac{B}{2}} \left| \sum_{k} u^\prime_k e^{-j2\pi h_k u} \right|^2 \cdot du
	= \sum_{k,l}u_k^{\prime *}\breve{s}^W_{k,l}u_l^\prime
	\end{aligned}
	\end{equation}
	where in the last equality, there is a change of variable from $\mathbf{u}$  to $\mathbf{g} \circ \mathbf{u}^\prime$. If $\mathbf{u}^\prime = \breve{\mathbf{u}}_i^W$ then $\mathbf{u}$ maximizes \eqref{equ.sigma_i} and for $j<i$, we have,
	\begin{equation}
	\mathbf{u}^H \breve{\mathbf{u}}_j = 
	(\mathbf{g}\circ\breve{\mathbf{u}}_i^W)^H(\mathbf{g}\circ\breve{\mathbf{u}}_j^W) = 0
	\end{equation}
	then $\breve{\mathbf{u}}_i=\mathbf{g}\circ\breve{\mathbf{u}}_i^W$. Furthermore, the equality of $\sigma_i(\breve{\mathbf{S}})$ and $\sigma_i(\breve{\mathbf{S}}^W)$ and their positive semi-definiteness is obtained from \eqref{equ.sigma_i}.
		Given that $S(f)\ge 0$ for all $f$ (see (9-164) in \cite{Papoulis2002a}), regarding \eqref{equ.int_approx}, positive semi-definiteness of $\breve{\mathbf{S}}$ holds also in general non-uniform case.
	Denoting $\breve{U}^W_{\mathcal{SS}^m}$ as the SS-Transform of wideband part of the GSV, expression for $\sigma(\breve{\mathbf{S}})$ in \eqref{equ.sigma_i}, is in-fact integration of the power of the $\breve{U}^W_{\mathcal{SS}^m}$ over the whole bandwidth,
	\begin{equation}\label{equ.Jx2}
	\sigma_1(\breve{\mathbf{S}}) = 
	%\sigma_1(\breve{\mathbf{S}}^W) = 
	\int_{-\frac{B}{2}}^{+\frac{B}{2}} 
	\left| \breve{U}^W_{\mathcal{SS}^m}(u,\theta) \right|^2 \cdot du
	\end{equation}
\end{proof}
%\input{src/Appendix2}
% ------------------------------------end of document-------------------------------------
\bibliographystyle{IEEEtr}%plainnat%IEEEtr%IEEEtran
%\bibliographystyle{elsarticle-num}
%\bibliography{IEEEabrv,D:/Ebook/library}
%\bibliography{IEEEabrv,ref/library}

\begin{thebibliography}{10}

\bibitem{Nikookar2009}
H.~Nikookar and R.~Prasad, {\em {Introduction to Ultra Wideband for Wireless
  Communications}}.
\newblock 2009.

\bibitem{Chellappa2014}
S.~Theodoridis, R.~Chellappa, M.~Viberg, and A.~Zoubir, {\em {Array and
  Statistical Signal Processing}}.
\newblock Academic Press, 1~ed., 2013.

\bibitem{Tuncer2009}
T.~E. Tuncer and B.~Friedlander, {\em {Classical and modern
  direction-of-arrival estimation}}.
\newblock Academic Press, 2009.

\bibitem{Kailatha1984}
M.~Wax, T.~J. Shan, and T.~Kailath, ``{Spatio-temporal spectral analysis by
  eigenstructure methods},'' {\em IEEE Transactions on Acoustics, Speech, and
  Signal Processing}, vol.~32, no.~4, pp.~817--827, 1984.

\bibitem{Wang1985}
H.~Wang and M.~Kaveh, ``{Coherent signal-subspace processing for the detection
  and estimation of angles of arrival of multiple wide-band sources},'' {\em
  Acoustics, Speech and Signal Processing, IEEE Transactions on}, vol.~33,
  pp.~823--831, aug 1985.

\bibitem{Lee1994}
T.~S. Lee, ``{Efficient Wideband Source Localization Using Beamforming
  Invariance Technique},'' {\em IEEE Transactions on Signal Processing},
  vol.~42, no.~6, pp.~1376--1387, 1994.

\bibitem{ward1998broadband}
D.~B. Ward, Z.~Ding, and R.~A. Kennedy, ``{Broadband doa estimation using
  frequency invariant beamforming},'' {\em IEEE Transactions on Signal
  Processing}, vol.~46, no.~5, p.~1463, 1998.

\bibitem{Sathish2006}
C.~Sathish, {\em {Advances in Direction-of-Arrival}}.
\newblock Artech House, 2006.

\bibitem{Yoon2006a}
Y.~S. Yoon, L.~M. Kaplan, and J.~H. McClellan, ``{TOPS: New DOA estimator for
  wideband signals},'' {\em IEEE Transactions on Signal Processing}, vol.~54,
  no.~6, pp.~1977--1989, 2006.

\bibitem{Okane2010}
K.~Okane and T.~Ohtsuki, ``{Resolution improvement of wideband
  direction-of-arrival estimation "Squared-TOPS"},'' in {\em IEEE International
  Conference on Communications}, pp.~1--5, may 2010.

\bibitem{Shaw2016}
A.~K. Shaw, ``{Improved Wideband DOA Estimation Using Modified TOPS (mTOPS)
  Algorithm},'' {\em IEEE Signal Processing Letters}, vol.~23, no.~12,
  pp.~1697--1701, 2016.

\bibitem{Chen2002}
J.~C. Chen, R.~E. Hudson, and K.~Yao, ``{Maximum-likelihood source localization
  and unknown sensor location estimation for wideband signals in the
  near-field},'' {\em IEEE Transactions on Signal Processing}, vol.~50,
  pp.~1843--1854, aug 2002.

\bibitem{Mada2009}
K.~K. Mada, H.-c. Wu, and S.~S. Iyengar, ``{Efficient and Robust EM Algorithm
  for Multiple Wideband Source Localization},'' {\em IEEE Transactions on
  Vehicular Technology}, vol.~58, pp.~3071--3075, jul 2009.

\bibitem{Lu2011b}
L.~Lu and H.-c. Wu, ``{Robust Expectation-Maximization Direction-of-Arrival
  Estimation Algorithm for Wideband Source Signals},'' {\em IEEE Transactions
  on Vehicular Technology}, vol.~60, pp.~2395--2400, jun 2011.

\bibitem{buckley1988broad}
K.~M. Buckley and L.~J. Griffiths, ``{Broad-band signal-subspace
  spatial-spectrum (BASS-ALE) estimation},'' {\em IEEE Transactions on
  Acoustics, Speech, and Signal Processing}, vol.~36, no.~7, pp.~953--964,
  1988.

\bibitem{Yin2018b}
B.~Yin, Y.~Xu, Y.~Huang, Y.~Lu, and Z.~Liu, ``{Direction Finding for Wideband
  Source Signals via Steered Effective Projection},'' {\em IEEE Sensors
  Journal}, vol.~18, pp.~741--751, jan 2018.

\bibitem{Agrawal2000}
M.~Agrawal and S.~Prasad, ``{Broadband DOA estimation using "Spatial-Only"
  modeling of array data},'' {\em IEEE Transactions on Signal Processing},
  vol.~48, no.~3, pp.~663--670, 2000.

\bibitem{Yang2017}
Z.~Yang, J.~Li, P.~Stoica, and L.~Xie, {\em {Sparse methods for
  direction-of-arrival estimation}}.
\newblock 2018.

\bibitem{malioutov2005sparse}
D.~Malioutov, M.~{\c{C}}etin, and A.~S. Willsky, ``{A sparse signal
  reconstruction perspective for source localization with sensor arrays},''
  {\em IEEE Transactions on Signal Processing}, vol.~53, no.~8, pp.~3010--3022,
  2005.

\bibitem{liu2011direction}
Z.~M. Liu, Z.~T. Huang, and Y.~Y. Zhou, ``{Direction-of-arrival estimation of
  wideband signals via covariance matrix sparse representation},'' {\em IEEE
  Transactions on Signal Processing}, vol.~59, no.~9, pp.~4256--4270, 2011.

\bibitem{Hu2012a}
N.~Hu, D.~Xu, X.~Xu, and Z.~Ye, ``{Wideband DOA estimation from the sparse
  recovery perspective for the spatial-only modeling of array data},'' {\em
  Signal Processing}, vol.~92, pp.~1359--1364, may 2012.

\bibitem{Amirsoleimani2019}
S.~Amirsoleimani and A.~Olfat, ``{Single stage DOA-frequency representation of
  the array data with source reconstruction capability},'' {\em Signal
  Processing}, vol.~162, pp.~242--252, 2019.

\bibitem{He2015}
Z.~Q. He, Z.~P. Shi, L.~Huang, and H.~C. So, ``{Underdetermined DOA estimation
  for wideband signals using robust sparse covariance fitting},'' {\em IEEE
  Signal Processing Letters}, vol.~22, no.~4, pp.~435--439, 2015.

\bibitem{M.Buckley1986}
K.~M.Buckley, {\em {A source representation space approach to digital array
  processing}}.
\newblock Ph.d. dissertation, University of southern california, Los Angeles,
  1986.

\bibitem{Horn2013}
R.~A. Horn and C.~R. Johnson, {\em {Matrix Analysis}}.
\newblock Cambridge University Press, 2~ed., 2013.

\bibitem{Pan1999}
V.~Y. Pan and Z.~Q. Chen, ``{The complexity of the matrix eigenproblem},'' in
  {\em Proceedings of the thirty-first annual ACM symposium on Theory of
  computing}, pp.~507--516, ACM, 1999.

\bibitem{Golub1996}
G.~H. Golub and C.~F. {Van Loan}, {\em {Matrix Computations (3rd Ed.)}}.
\newblock USA: Johns Hopkins University Press, 1996.

\bibitem{Kaveh1988}
H.~Hung and M.~Kaveh, ``{Focussing matrices for coherent signal-subspace
  processing},'' {\em IEEE Transactions on Acoustics, Speech, and Signal
  Processing}, vol.~36, no.~8, pp.~1272--1281, 1988.

\bibitem{sturm1999}
J.~F. Sturm, ``{Using SeDuMi 1.02, a MATLAB toolbox for optimization over
  symmetric cones},'' {\em Optimization methods and software}, vol.~11,
  no.~1-4, pp.~625--653, 1999.

\bibitem{Papoulis2002a}
{Papoulis, Athanasios} and S.~U. Pillai, {\em {Probability, random variables,
  and stochastic processes}}.
\newblock NewYork: McGraw-Hill, 4~ed., 2002.

\end{thebibliography}

\end{document}